\documentclass[a4paper,12pt,reqno]{amsart}
          \usepackage{amssymb}
	  \usepackage{amsmath}
          \usepackage{amsfonts}
          \usepackage[english]{babel}
          \usepackage[utf8]{inputenc}

\usepackage[margin=2.5cm]{geometry}

\pagestyle{plain}

 \usepackage[unicode,colorlinks,plainpages=false,hyperindex=true,bookmarksnumbered=true,bookmarksopen=false,pdfpagelabels]{hyperref}
 \hypersetup{urlcolor=cyan,linkcolor=blue,citecolor=red,colorlinks=true}
\hfuzz1pc

 \usepackage{color}


\newtheorem{thm}{Theorem}[section]

\newtheorem{lem}[thm]{Lemma}
\newtheorem{prop}[thm]{Proposition}

\newtheorem{rem}[thm]{Remark}

\numberwithin{equation}{section}

\newcommand{\be}{\begin{equation}}
\newcommand{\ee}{\end{equation}}
\newcommand{\bea}{\begin{eqnarray}}
\newcommand{\eea}{\end{eqnarray}}
\newcommand{\ba}{\begin{aligned}}
\newcommand{\ea}{\end{aligned}}

\def\1{{\boldsymbol 1}}                     %
\def\cH{{\mathcal H}}                       %
\def\tr{\mathrm{tr}}                        %
\def\diag{\mathrm{diag}}                    %
\def\ri{{\rm i}}                            %
\def\bC{\mathbb{C}}                         %
\def\bR{\mathbb{R}}                         %
\def\bP{\mathbb{P}}                         %
\def\cF{{\mathcal F}}                       %
\def\reg{\mathrm{reg}}                      %
\def\red{\mathrm{red}}                      %
\def\dt {\left.\frac{d}{dt}\right|_{t=0}}   %
\def\rank{{\mathrm{rank}}}                  %
\def\ad{\mathrm{ad}}                        %
\def\fN{\mathfrak{N}}                       %
\def\bZ{\mathbb{bZ}}                        %
\def\bT{\mathbb{T}}                         %
\newcommand\br[1]{\{ #1 \}}                 %
\def\cC{\mathcal{C}}                        %
\def\cS{\mathcal{S}}                        %
\def\cZ{{\mathcal Z}}                       %
\def\cK{{\mathfrak k}}                      %
\def\cP{{\mathfrak p}}                      %
\def\fg{{\mathfrak g}^\bC}                  %
\def\cT{{\mathfrak t}}                      %
\def\cA{{\mathfrak a}}                      %
\def\fG0{{{\mathfrak g}_0}}                 %

\begin{document}

\title{On the maximal superintegrability of \\ strongly isochronous Hamiltonians }

\maketitle

\begin{center}

L. Feh\'er${}^{a,b}$

\medskip
${}^a$Department of Theoretical Physics, University of Szeged\\
Tisza Lajos krt 84-86, H-6720 Szeged, Hungary\\
e-mail: lfeher@physx.u-szeged.hu

\medskip
${}^b$HUN-REN Wigner Research Centre for Physics\\
 H-1525 Budapest, P.O.B.~49, Hungary

\end{center}

\begin{abstract}
We study strongly isochronous Hamiltonians  that generate periodic time evolution with the same basic period
for a dense set of initial values.
We explain  that all such Hamiltonians are maximally superintegrable, and
 show that if the system is
subjected
to Hamiltonian reduction based on  a compact symmetry group
and certain conditions are met, then the
reduced Hamiltonian is strongly isochronous with the original basic period.
We utilize these  simple observations for demonstrating the maximal superintegrability
of rational spin Calogero--Moser type models in  confining harmonic  potential.
\end{abstract}

 \tableofcontents

\newpage

\section{Introduction}
\label{S:1}

 Let us recall that a  Hamiltonian is called superintegrable (or degenerate integrable) if
the number of its independent constants of motion is larger than half the dimension $D$ of the phase space,  here assumed
to be a symplectic manifold. Maximal superintegrability is the extreme case when the functional
dimension of the ring of constants of motion equals $(D-1)$, and this definition
also applies to Poisson manifolds.
Superintegrable Hamiltonian systems have a long history and represent an
active research subject.
For background and recent activities, we refer to \cite{CJRX,Ev,FG,WGMPnew,MPV,Nekh,Re1,Re2,WT} and references therein.

Our motivation for the present work arose from the paper \cite{BM}, in which the authors enquired about
the degenerate integrability of the rational spin Calogero--Moser model in a confining harmonic potential.
They were concerned with the quantum mechanics of the model, but here we focus
on the  corresponding classical Hamiltonian
\be
\cH_{\mathrm{spin}}(q,p, \zeta) =\frac{1}{2}\sum_{i=1}^n (p_i^2 + \omega^2 q_i^2) +
\frac{1}{2} \sum_{ i\neq j} \frac{\vert \zeta_i \zeta_j^\dagger\vert^2}{(q_i - q_j)^2},
\label{1}\ee
where $q_i, p_i$ $(i=1,\dots, n)$ are Darboux variables and the $\zeta_i$ are multicomponent complex row vectors
of fixed length ($\zeta_i \zeta_i^\dagger = c >0$),  whose
phases can be arbitrarily changed by gauge transformations.
If the $\zeta_i$ are just numbers, then
$\vert \zeta_i \zeta_j^\dagger\vert$ becomes a coupling constant, and \eqref{1} gives
the celebrated rational Calogero--Moser model (for $\omega=0$) and its confining variant (for $\omega \neq 0$), whose maximal
superintegrability is well known \cite{Ad,Wo1}.

For most physical applications the $q_i$ and the $p_i$ are real variables, associated with particles moving on the line,
but the complex holomorphic version of the model also received a lot of attention \cite{GH,Wil}.
 In the complex  case\footnote{In this case $\vert \zeta_i \zeta_j^\dagger \vert^2$ is replaced by $(\zeta_i \eta_j)(\zeta_j \eta_i)$ where
 the $\zeta$'s and the $\eta$'s are independent complex row and column vectors.}
 maximal superintegrability was recently proven in  \cite{FG}.
Their proof starts with the $\omega=0$ case, and then uses  complex analyticity to establish the functional independence
of some explicitly given constants of motion for generic values of the parameter $\omega$.
This proof does not apply  directly to the real case that interests us.

Our basic observation is that in the real case the maximal superintegrability, for arbitrary $\omega \neq 0$,  is an immediate
consequence of the `strongly isochronous' nature of the Hamiltonian \eqref{1}.
In this paper, a Hamiltonian is called strongly isochronous if
 all solutions of the Hamiltonian evolution equation
are periodic, with the same basic period $T>0$ for an open dense set of initial values.
Up to an obvious re-scaling, this means that the Hamiltonian is the moment map for a $U(1)$ action on the phase space.
 The simplest example  of such systems is the $N$-dimensional isotropic harmonic oscillator.

It is known to experts that all `strongly isochronous' Hamiltonians are
  maximally superintegrable.
  Without proof,  the claim  appears, for example, in Calogero's book \cite{Cal}.
In section \ref{S:2},  we deduce this statement from a known
property of invariants of compact Lie groups acting on smooth manifolds.
  Then, in section \ref{S:3}, we explain  that under certain
  conditions the Hamiltonian reductions of strongly isochronous systems remain strongly isochronous, with the original basic period.
The three subsequent sections are devoted to examples of strongly isochronous systems
obtained as reductions of higher dimensional isotropic harmonic oscillators.
In section \ref{S:4}, we deal with generalizations of the model \eqref{1} in which the spin variables belong to the dual
space of a compact simple Lie algebra, reduced by the maximal torus of the corresponding Lie group.
In section \ref{S:5}, we return to the Hamiltonian \eqref{1}, and expound its maximal
superintegrability.
The pair potentials featuring in \eqref{1} show that this model is associated with
a root system of type $A_{n-1}$, and in section \ref{S:6} we teat an analogous model
corresponding to the root system $B_n$, as given by the Hamiltonian in \eqref{HB2}.
Finally, in section \ref{S:7}, we  briefly discuss  open problems and draw attention to other examples
of confining  spin Calogero--Moser type models that emerge from Hamiltonian reduction.

We finish this introduction with the terminological remark that the confining spin Calogero--Moser models
are called harmonic spin Calogero--Moser models in \cite{FG}, and it would be also fitting
to refer to them as  spin Calogero--Moser oscillators.

 \section{Preliminaries and a general result}
 \label{S:2}

To begin, consider a connected $C^\infty$  Poisson manifold $(M,\br{-,-})$ with a distinguished real function $H \in C^\infty(M)$.
By definition, the  Hamiltonian system  $(M,\br{-,-}, H)$ (or just the Hamiltonian $H$) is
\emph{maximally superintegrable} if the commutant of $H$ in the Poisson algebra $C^\infty(M)$, given by
\be
\{ F \in C^\infty(M) \mid \{F,H\}= 0 \},
\ee
has functional dimension $\dim(M) -1$.
That is,  the points $x\in M$ for which
the exterior derivatives of the time independent, smooth first integrals of the Hamiltonian vector field $V_{H}$ span
a co-dimension $1$ subspace of $T_x^*M$ form a dense open submanifold of $M$.
Note in passing that in our convention the Poisson bracket and the Hamiltonian vector field are related by
$\{F, H\} = dF(V_H)$, $\forall F,H\in C^\infty(M)$.

Next, take a Hamiltonian $H\in C^\infty(M)$ and suppose that
 the Hamiltonian vector field $V_H$ is complete, i.e.,
the solution of the initial value problem
\be
\frac{d \phi_t(x)}{dt} = V_H(\phi_t(x)), \quad \phi_0(x)=x
\label{21}\ee
is defined for all $t\in \bR$ and $x\in M$.
Then, the smooth mapping
\be
\phi: \bR \times M \to M, \qquad \phi(t,x): = \phi_t(x),
\label{22}\ee
yields  an action of the additive group $\bR$ on $M$.
We call the Hamiltonian $H$ (or equivalently the  system $(M,\br{-,-},H)$)
 \emph{strongly isochronous } if the curve $\phi_t(x)$ is periodic in $t$ with the same period $T>0$ for all
  $x$ from a dense open submanifold $M_*$ of $M$.
In other words, we assume the existence of a dense open subset $M_* \subset M$  and a positive number $T$ such that for $x\in M_*$
\be
\phi_{t_1}(x) = \phi_{t_2}(x) \quad \Longleftrightarrow \quad (t_1 - t_2)\in T \bZ,
\label{23}\ee
where $\bZ$ denotes the set of integers.
It follows by continuity of the map $\phi$ \eqref{22} that $\phi_{t+T}(x) = \phi_t(x)$ holds for
all $x\in M\setminus M_*$ as well.
The complement of $M_*$ can contain equilibrium points and periodic orbits with period $T/n$ for some positive integers $n\neq 1$.
Strongly  isochronous Hamiltonians are also known as periodic Hamiltonians.
For any such Hamiltonian with basic period $T>0$,
the following formula defines a smooth, effective action of the Lie group $U(1) = \bR/(2\pi \bZ)$ on $M$:
\be
A: U(1) \times M \to M, \quad A(e^{\ri \tau}, x) := \phi_{\tau T/(2\pi)}(x).
\label{24}\ee
In what follows, we also use $A_{e^{\ri \tau}}(x):= A(e^{\ri \tau}, x)$.
The submanifold $M_*$ consists of those points whose $U(1)$ isotropy group is trivial.
If $H$ is strongly isochronous, then the commutant of $H$ in $C^\infty(M)$ is given by ring of $U(1)$ invariant
functions, denoted $C^\infty(M)^{U(1)}$.
In consequence of a general result presented below, this ring of invariants has functional dimension $\dim(M) -1$.

Now, suppose that we have a $C^\infty$ action of a compact Lie group $K$ on a connected manifold $M$.
Let $L$ be the isotropy subgroup, $L=K_{x_0}<K$, of some point $x_0\in M$, and
denote by $(L)$ the collection of the subgroups of $K$ that are conjugate to $L$.
Then,
 \be
 M_{(L)}:= \{ x\in M \mid K_x \in (L) \}
 \label{25}\ee
is a submanifold of $M$, called the $(L)$ orbit type submanifold.
 The orbit $K\cdot x$ is a closed submanifold of $M$,  diffeomorphic to $K/L$ for every $x\in M_{(L)}$.
According to a fundamental result \cite{DK,Mic}, there exists a unique conjugacy class of isotropy  subgroups $(L)$ of $K$ such that
$K_x$ contains an element of $(L)$ for every $x\in M$.
The corresponding subset $M_{(L)}$ is \emph{dense and open} in $M$, and is called the  \emph{submanifold of principal orbit type},
henceforth denoted $M_*^K$.
The orbits $K\cdot x$ for $x\in M_*$ are  the principal orbits.
Informally speaking, $M_*^K$ is the subset of elements having the smallest possible isotropy groups.
The quotient space $M_*^K/K$ is connected if $M$ is connected, and $M_*^K$ itself is connected if $K$ is also connected.

If the group $K$ is Abelian, then any conjugacy class of subgroups is just a single group. For an \emph{effective}
action of a compact  Abelian Lie group on a connected manifold, the principal isotropy group is trivial, since for an effective action
only the unit element acts as the identity transformation.

\begin{lem}\label{lem:1}
Consider the ring of invariant functions $C^\infty(M)^K$  with respect to a smooth action
of a compact Lie group $K$  on a connected smooth manifold $M$.
Let $s$ be the co-dimension in $M$ of a principal  orbit $K\cdot x \subset M_*^K$.
Then,
the exterior derivatives of the elements of $C^\infty(M)^K$ form an $s$-dimensional subspace of $T_x^*M$
at every $x\in M_*^K$.
\end{lem}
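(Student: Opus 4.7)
The plan is to identify the span of $\{d_x F : F\in C^\infty(M)^K\}$ with the annihilator in $T_x^*M$ of the tangent space $T_x(K\cdot x)$, and to verify that this annihilator is exactly $s$-dimensional. The easy half is immediate: every $F\in C^\infty(M)^K$ is constant along orbits, so $d_xF$ vanishes on $T_x(K\cdot x)$. For $x\in M_*^K$ the orbit is principal of codimension $s$, so $T_x(K\cdot x)$ has dimension $\dim(M)-s$ and its annihilator in $T_x^*M$ is $s$-dimensional, giving the upper bound.

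For the matching lower bound, the key tool is the slice (tube) theorem for smooth actions of a compact Lie group: an equivariant tubular neighborhood of the orbit $K\cdot x$ is equivariantly diffeomorphic to the associated bundle $K\times_L V$, where $L=K_x$ and $V$ is an $L$-invariant linear complement to $T_x(K\cdot x)$ in $T_xM$ (produced, for example, via a $K$-invariant Riemannian metric obtained by averaging), with $\dim V=s$. The decisive feature at a principal orbit point is that the induced linear $L$-action on $V$ must be trivial in a neighborhood of $0$: otherwise perturbing $x$ along a non-fixed direction would yield nearby points with strictly smaller isotropy subgroups, contradicting the characterization of $(L)$ as the principal isotropy class. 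Consequently, $K$-invariant smooth functions on the tube correspond bijectively to arbitrary smooth functions of the slice coordinate. Taking linear coordinates on $V\cong\bR^s$ and multiplying by a $K$-invariant cut-off function supported in the tube (again obtained by averaging) produces $s$ globally defined elements of $C^\infty(M)^K$ whose differentials at $x$ span the full $s$-dimensional annihilator of $T_x(K\cdot x)$.

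The only non-routine point in this argument is the local triviality of the slice representation at principal orbit points, which is exactly the content of the principal orbit type theorem invoked in the preamble to the lemma. The remaining ingredients, namely the tube theorem and the construction of invariant bump functions by averaging, are standard for actions of compact groups and do not pose additional difficulties.
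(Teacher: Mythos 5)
Your proof is correct, and since the paper does not prove Lemma \ref{lem:1} in the text but only cites a reference for it, there is no in-paper argument to diverge from; what you give is the standard slice-theorem proof that such references contain. The easy inclusion (differentials of invariants annihilate $T_x(K\cdot x)$, so the span is at most $s$-dimensional) and the construction of $s$ independent invariants from linear slice coordinates multiplied by an invariant cut-off are both fine. The one step you state rather tersely is the triviality of the slice representation at a principal point: if $L=K_x$ acted nontrivially on the slice $V$, there would be $v\in V$ with $L_v\subsetneq L$, and the nearby point $y$ corresponding to $[e,\epsilon v]$ would have $K_y=L_v$; to contradict principality in the form quoted in the paper (every isotropy group contains a conjugate of $L$) you must additionally use that a compact subgroup of $L$ conjugate in $K$ to $L$ is equal to $L$ (equal dimension forces it to contain the identity component, and equality of the number of components finishes it). With that spelled out, the nearby isotropy groups genuinely fail to contain a conjugate of $L$, the slice representation is trivial, invariant functions on the tube $K\times_L V\cong (K/L)\times V$ are exactly the smooth functions of the $V$-coordinate, and the differentials of your $s$ cut-off invariants span the full annihilator of $T_x(K\cdot x)$ at $x$. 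No gap beyond that elaboration.
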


A proof of the this well known result can be found, for example, in \cite{WGMPnew}.
It implies that $C^\infty(M)^K$ has a well-defined functional dimension, $\mathrm{ddim}(C^\infty(M)^K)$, given by
\be
\mathrm{ddim}(C^\infty(M)^K) =  s \equiv \dim(M) - (\dim(K) - \dim(K_x)), \qquad \forall x \in M_*^K.
\label{26}\ee

The above-described preliminaries lead to the claim stated in the introduction.

\begin{thm}\label{thm:22}
 All strongly isochronous Hamiltonians are maximally superintegrable.
\end{thm}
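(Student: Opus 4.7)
The plan is to reduce the statement to a direct application of Lemma \ref{lem:1} with $K = U(1)$ acting on $M$ through the flow of $H$. I take for granted the setup already recorded in the excerpt: the flow $\phi_t$ is complete, it factors through the smooth effective $U(1)$-action $A$ from \eqref{24}, and the commutant of $H$ in $C^\infty(M)$ coincides with the ring of invariants $C^\infty(M)^{U(1)}$.

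First I would verify, or simply recall, the identification of the commutant with the $U(1)$-invariants. A smooth $F$ satisfies $\{F,H\}=dF(V_H)=0$ exactly when $F$ is constant along every integral curve of $V_H$, i.e.\ $F\circ\phi_t=F$ for all $t\in\bR$; by \eqref{24} this is the same as $F\circ A_{e^{\ri\tau}}=F$ for all $\tau$, so the commutant is precisely $C^\infty(M)^{U(1)}$. Hence it suffices to show that this invariant ring has functional dimension $\dim(M)-1$.

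Next I would specialize Lemma \ref{lem:1} to $K=U(1)$. Since $U(1)$ is a compact connected Abelian Lie group and the action $A$ is effective on the connected manifold $M$, the paragraph preceding Lemma \ref{lem:1} guarantees that the principal isotropy group is trivial. Therefore every principal orbit is diffeomorphic to $U(1)$ and has dimension $1$, so its codimension is $s=\dim(M)-1$. Formula \eqref{26} then gives
\be
\mathrm{ddim}\bigl(C^\infty(M)^{U(1)}\bigr)=\dim(M)-\bigl(\dim U(1)-0\bigr)=\dim(M)-1,
\ee
and combined with the previous step this is exactly the defining condition of maximal superintegrability.

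There is really no hard obstacle here; the only substantive point is the identification of the dense open set $M_*$ on which the flow has prime period $T$ with (a subset of) the principal orbit-type stratum $M_*^{U(1)}$, which is needed to conclude that the principal isotropy is trivial. This is automatic once one observes that on $M_*$ the $U(1)$-stabiliser is trivial by \eqref{23}, so the principal isotropy class must be $\{e\}$; the rest is a mechanical application of Lemma \ref{lem:1}.
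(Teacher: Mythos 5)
Your proposal is correct and follows essentially the same route as the paper: identify the commutant of $H$ with $C^\infty(M)^{U(1)}$ for the action \eqref{24}, observe that the principal isotropy group is trivial, and conclude via Lemma \ref{lem:1} and formula \eqref{26}. The extra details you supply (the equivalence $\{F,H\}=0\Leftrightarrow F\circ\phi_t=F$, and the triviality of the stabiliser on $M_*$ from \eqref{23}) are exactly the points the paper leaves implicit.
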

\begin{proof}
If  $(M,\br{-,-},H)$ is a strongly isochronous Hamiltonian system, then the commutant of $H$ in $C^\infty(M)$ is
given by the invariants $C^\infty(M)^{U(1)}$ for the associated $U(1)$ action \eqref{24}.
In this case the principal isotropy group is trivial.
Thus, by \eqref{26}, the functional dimension of $C^\infty(M)^{U(1)}$ is
equal to $\dim(M)-1$.
\end{proof}

We do not claim any originality regarding theorem \ref{thm:22}, but thought it worthwhile to present
it together with the underlying group theoretic background.

\section{Reductions of strongly isochronous Hamiltonian systems}
\label{S:3}

Below, we formulate  simple sufficient conditions for guaranteeing that reductions of
strongly isochronous systems remain strongly isochronous.

\subsection{Poisson reduction}

Let $(M,\br{-,-},H)$  be a strongly isochronous system on a connected Poisson manifold $M$
that admits  a connected, compact Lie group $K$ as symmetry group.
By this we mean that we have a smooth action
\be
A^K: K \times M \to M
\label{31}\ee
such that the group elements act by Poisson diffeomorphisms that also preserve the Hamiltonian $H$.
Then, restricting to the submanifold of principal orbit type $M_*^K$ for the $K$ action, we obtain the reduced Hamiltonian system
\be
(M_*^\red, \br{-,-}_*^\red, H_*^\red).
\label{32}\ee
The reduced Poisson bracket on the \emph{smooth manifold}
\be
M_*^\red := M_*^K/K
\label{33}\ee
is defined by the natural identification
\be
C^\infty (M_*^K/K) \simeq C^\infty(M_*^K)^K.
\label{34}\ee
It should be noted that the isotropy group is constant along any integral curve of any $K$-invariant Hamiltonian.
In particular, the integral curves of $H$ preserve $M_*^K$, and they project onto the integral curves of the reduced Hamiltonian
$H_*^\red$ defined by the equality
\be
H \circ \iota = H_*^\red \circ \pi,
\label{35}\ee
where
\be
\iota: M_*^K \to M
\quad \hbox{and}\quad
\pi: M_*^K \to M_*^\red
\label{36}\ee
are the tautological embedding and the natural projection, respectively.

The $U(1)$ action (\ref{24}) generated by the Hamiltonian $H$ and the $K$ action commute, and they can be combined into
an action of the direct product group $U(1)\times K$ on $M$, which gives the Poisson diffeomorphisms
\be
A_{(e^{\ri \tau},g)} := A_{e^{\ri \tau}} \circ A_g^K, \qquad \forall (e^{\ri \tau}, g) \in U(1) \times K.
\label{37}\ee
Now, we make the assumption that the principal orbit type submanifolds for the $U(1) \times K$ and the $K$ actions,
denoted $M_*^{U(1) \times K}$  and $M_*^K$,
are related in the following way:
\be
M_*^{U(1) \times K} = \{ x \in M^K_* \mid (U(1) \times K)_x = \{1\} \times K_x \},
\label{38}\ee
where $1\in U(1)$ is the unit element.
Since $M$ is connected, and the isotropy group $(U(1) \times K)_x$ contains $\{ 1\} \times K_x$,
the assumption is equivalent to the existence of a single point $x\in M_*^K$
satisfying
\be
(U(1) \times K)_x =   \{ 1\} \times K_x.
\label{310}\ee
This guarantees that the conjugacy class of principal isotropy groups for the action (\ref{37}) is
 $\{1\} \times (K_x)$, where $K_x$ is a principal isotropy group for the $K$ action.

 The condition (\ref{38}) is easily seen to be equivalent to the existence of a dense open subset of $M$ whose points
 $x$ have the property
 \be
 U(1) \cdot x \cap K \cdot x = \{x\},
 \ee
 i.e., the $U(1)$ orbit and the $K$ orbit through $x$ intersect only in $x$.

\begin{prop}\label{prop:31}
Let $(M,\br{-,-},H)$ be a $K$-invariant strongly isochronous Hamiltonian system    on the connected manifold $M$,
and suppose that the $U(1)$ action (\ref{24}) generated by the Hamiltonian $H$ and the action of the connected, compact
symmetry group $K$ verify the condition  \eqref{38}.
Then, the reduced system \eqref{32}  on the space of principal $K$ orbits $M_*^\red$  \eqref{33} is strongly isochronous,
with the same basic period  as the original system.
\end{prop}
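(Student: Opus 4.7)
The plan is to verify the two defining requirements of strong isochrony for the reduced system: global $T$-periodicity of every reduced orbit, and that $T$ is the basic period on a dense open subset of $M_*^\red$.

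First I would construct the descended flow carefully. Since $H$ is $K$-invariant, the flow $\phi_t$ commutes with $A^K$, and since the $K$-isotropy group of a point is constant along any $K$-invariant integral curve, $\phi_t$ leaves $M_*^K$ stable. It therefore projects to a smooth flow $\phi_t^\red$ on $M_*^\red = M_*^K/K$ satisfying $\pi\circ \phi_t = \phi_t^\red\circ \pi$; by uniqueness, this is the flow of the reduced Hamiltonian $H_*^\red$ defined by \eqref{35}. Since $\phi_T = \mathrm{id}_M$ on all of $M$ (as remarked after \eqref{23}), we immediately obtain $\phi_T^\red = \mathrm{id}_{M_*^\red}$, so every reduced orbit is $T$-periodic.

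The heart of the argument is the minimality of $T$. For a fixed integer $n\ge 2$ and $x\in M_*^K$, the condition $\phi_{T/n}^\red([x]) = [x]$ translates, using $\pi \circ \phi_{T/n} = \phi_{T/n}^\red \circ \pi$, to $\phi_{T/n}(x)= A_g^K(x)$ for some $g\in K$, which via \eqref{37} is the same as $(e^{2\pi \ri/n}, g^{-1}) \in (U(1)\times K)_x$. By hypothesis \eqref{38}, for $x$ in the dense open set $M_*^{U(1)\times K}$ the combined isotropy equals $\br{1}\times K_x$, so the $U(1)$-component of any isotropy element is $1$; this forces $e^{2\pi \ri/n}=1$, which is impossible for $n\ge 2$. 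Thus $\phi_{T/n}^\red([x]) \neq [x]$ for every $n\ge 2$, and since the period set of a continuous periodic curve is a closed subgroup of $\bR$ (so any period of a nonconstant curve is a positive integer submultiple of the basic period, while a constant orbit would trivially be $T/2$-periodic), the basic period at $[x]$ equals $T$.

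It then remains to check that $\pi(M_*^{U(1)\times K})$ is dense and open in $M_*^\red$. Openness follows because $\pi$ is an open map, being the orbit projection of a proper compact-group action, and density because $M_*^{U(1)\times K}\subseteq M_*^K$ is $K$-invariant and dense in $M_*^K$ (itself dense in $M$). The main conceptual obstacle, and exactly the point at which condition \eqref{38} is used, is the minimality step: without \eqref{38} the $K$-orbit through $x$ could intersect a fractional-period $U(1)$-orbit on an entire open set, and the reduced basic period could drop strictly below $T$. Everything else is essentially bookkeeping around the general orbit-type machinery already invoked in \eqref{25}.
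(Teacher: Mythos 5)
Your proof is correct and follows essentially the same route as the paper's: both translate a return of the reduced orbit, $\phi^\red_s([x])=[x]$, into an isotropy condition $A_{(e^{\ri\tau},g^{-1})}(x)=x$ for the combined $U(1)\times K$ action and invoke \eqref{38} to force $e^{\ri\tau}=1$ on the dense open set $M_*^{U(1)\times K}$, whose projection is dense open in $M_*^\red$. The only cosmetic difference is that the paper phrases the conclusion as freeness of the descended $U(1)$ action, while you unpack it via the closed-subgroup structure of period sets; the content is the same.
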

\begin{proof}
Since  $M_*^{U(1) \times K} \subset M$ and $M_*^K \subset M$ are dense open subsets, it follows from the assumption \eqref{38} that
 $M_*^{U(1) \times K} \subset M_*^K$ is dense open. Consequently,
\be
M_*^{U(1) \times K}/K  \subset M_*^K/K \equiv M_*^\red
\label{311}\ee
is also a dense open subset.

Next, observe that the Hamiltonian $U(1)$ action descends to the $U(1)$ action
\be
A^\red: U(1) \times M_*^\red \to M_*^\red
\label{312}\ee
 given (using $\pi$ in \eqref{36})  by
\be
A_{e^{\ri \tau}}^\red (\pi(x)) = \pi(A_{e^{\ri \tau}} (x)),
\qquad \forall e^{\ri \tau} \in U(1),\, x\in M_*^K.
\label{313}\ee
The definition of the Poisson reduction ensures that this reduced $U(1)$ action is generated by
the flow of the reduced Hamiltonian system \eqref{32}.
Thus, it is enough to show that the reduced $U(1)$ action is free on the subset $M_*^{U(1) \times K}/K$ \eqref{311}.
To see this, suppose that
\be
A_{e^{\ri \tau}}^\red (\pi(x)) = \pi(x) \quad\hbox{for}\quad x \in M_*^{U(1) \times K}.
\label{314}\ee
This means that there exists $g \in K$ for which $A_{e^{\ri \tau}}(x) = A^K_g(x)$, or equivalently
\be
A_{(e^{\ri \tau}, g^{-1})}(x) = x.
\label{315}\ee
By \eqref{38}, we must have $e^{\ri \tau} = 1$.
Therefore, the principal isotropy group for the reduced Hamiltonian $U(1)$ action on $M_*^\red$ is trivial,
and this immediately implies the claim.
\end{proof}

\begin{rem}
The reduced $U(1)$ action \eqref{313} can be restricted on any symplectic leaf of $M_*^\red$ \eqref{33}.
It follows from  \eqref{38}  that
the principal isotropy group for the restricted $U(1)$ action is trivial whenever the leaf intersects  $\pi(M_*^{U(1)\times K})$.
This entails that the restriction of the reduced Hamiltonian to every such symplectic  leaf is strongly
isochronous with the same basic period as the original Hamiltonian $H\in C^\infty(M)$.
\end{rem}

\begin{rem}
The constants of motion of the reduced system \eqref{32} are given by
\be
C^\infty(M_*^\red)^{U(1)}  = C^\infty(M_*^K)^{U(1) \times K},
\label{316}\ee
and with the embedding $\iota$ in \eqref{36} we have
\be
\iota^* \left( C^\infty(M)^{U(1) \times K}\right) \subset C^\infty(M_*^K)^{U(1) \times K}.
\label{317}\ee
This is in general a proper subset. By using  Lemma \ref{lem:1} and the relation \eqref{38}, one can show that the corresponding rings of functions
on $M_*^\red$ have the same functional dimension. The  elements  of $C^\infty(M)^{U(1) \times K}$
enjoy the advantage that they yield smooth functions on the full reduced phase space $M/K$.
\end{rem}

One may also study the reduced system on the full, stratified reduced phase space $M/K$, but this is outside the scope of the present paper.
For a recent work in this direction, see \cite{CJRX}

\subsection{Marsden--Weinstein Hamiltonian reduction}

Let$(M,\br{-,-},H)$  be a strongly isochronous system  with a connected, compact Lie group $K$ as symmetry group, as before, but now also assume
that the Poisson structure comes from a symplectic form, $\Omega$, and the $K$ action on $M$ is Hamiltonian with equivariant moment map
\be
J: M\to \cK^*.
\label{318}\ee
The dual $\cK^*$ of the  Lie algebra $\cK$ of $K$ carries the coadjoint action,
and for any $\mu\in J(M)$ the coadjoint isotropy group $K_\mu < K$ of $\mu$ acts on $J^{-1}(\mu)$.
Let us suppose that  $J^{-1}(\mu)$ is an embedded submanifold of $M$,
and that there is only a single orbit type
for the $K_\mu$ action on $J^{-1}(\mu)$.
Then, the Marsden--Weinstein--Meyer
reduction theorem \cite{OR}
 yields the reduced Hamiltonian system
\be
(M_\mu, \Omega_\mu, H_\mu)
\label{319}\ee
characterized by the following properties.\footnote{Usually it is required that the action of $K_\mu$ is free on $J^{-1}(\mu)$,
which guarantees that $\mu$ is a regular value for $J$, and thus $J^{-1}(\mu)$ is an embedded submanifold.}
First,  $M_\mu= J^{-1}(\mu)/K_\mu$ is a smooth manifold in such a way
that  the natural projection $\pi_\mu: J^{-1}(\mu) \to M_\mu$ is a smooth submersion.
Second, with the tautological inclusion $\iota_\mu: J^{-1}(\mu) \to M$, the reduced symplectic form $\Omega_\mu$  and reduced Hamiltonian $H_\mu$
obey the relations
\be
\pi_\mu^*(\Omega_\mu) = \iota_\mu^*(\Omega)
\quad\hbox{and}\quad
\pi_\mu^* (H_\mu) = \iota_\mu^*(H).
\label{320}\ee
In this case, the  integral curves of the evolution equation associated with the reduced system $(M_\mu, \Omega_\mu, H_\mu)$ are of the form
$\pi_\mu(x(t))$, where $x(t)$ is an integral  curve of the system $(M,\Omega,H)$ lying in $J^{-1}(\mu)$.

\begin{prop}\label{prop:34}
Consider the  action  of $U(1) \times K_\mu$ on $J^{-1}(\mu)$, which arises from the pertinent restriction of the action \eqref{37}.
Maintaining the above assumptions concerning $J^{-1}(\mu)$ and the $K_\mu$ action,
suppose  that every connected component of $J^{-1}(\mu)$ contains a point $x$ whose isotropy group is
\be
(U(1) \times K_\mu)_x = \{ 1\} \times (K_\mu)_x,
\label{321}\ee
where $(K_\mu)_x$ is the isotropy group of $x$ with respect to the $K_\mu$ action.
Then, the reduced system \eqref{319} is strongly isochronous, with the same basic period as the original system.
\end{prop}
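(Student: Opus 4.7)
The plan is to mirror the strategy of Proposition \ref{prop:31}, the essential task being to identify a dense open subset of $M_\mu$ on which the reduced $U(1)$ action has trivial isotropy. I would begin by verifying that the $U(1)$ flow \eqref{24} preserves each level set $J^{-1}(\mu)$: since $K$ acts on $M$ by Hamiltonian transformations and $H$ is $K$-invariant, the components $\langle J, \xi \rangle$ of the moment map (with $\xi \in \cK$) Poisson-commute with $H$, so $J$ is conserved along the flow of $V_H$. Because $U(1)$ and $K$ commute, the restriction of the $U(1)$ action to $J^{-1}(\mu)$ commutes with the restricted $K_\mu$ action, and together they define an action of $U(1) \times K_\mu$ on $J^{-1}(\mu)$. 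This $U(1)$ action descends to a smooth $U(1)$ action $A^\red$ on the quotient $M_\mu$, and by \eqref{320} the descended action is, up to the usual rescaling, the flow of $H_\mu$.

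The key structural step is to show that, on each connected component $C$ of $J^{-1}(\mu)$, the principal isotropy group of the $U(1) \times K_\mu$ action has trivial $U(1)$-component. By hypothesis, $C$ contains a point $x$ whose isotropy is exactly $\{1\} \times (K_\mu)_x$. By the principal orbit type theorem cited in Section \ref{S:2}, the principal isotropy group of the $U(1) \times K_\mu$ action on $C$ has a conjugate contained in every isotropy group, in particular inside $\{1\} \times (K_\mu)_x$. Since $U(1)$ is abelian, conjugation in $U(1) \times K_\mu$ does not alter the $U(1)$-factor, and this forces the principal isotropy to be of the form $\{1\} \times L$ for some $L < K_\mu$ (which, by the single $K_\mu$-orbit-type assumption, is necessarily conjugate to $(K_\mu)_x$, though we do not need this refinement).

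To conclude I would argue exactly as in Proposition \ref{prop:31}. The principal orbit type submanifold $C_*^{U(1) \times K_\mu}$ is dense and open in $C$, and its image under the submersion $\pi_\mu$ is dense and open in $\pi_\mu(C) \subset M_\mu$. For $y \in C_*^{U(1) \times K_\mu}$, if $A_{e^{\ri \tau}}^\red(\pi_\mu(y)) = \pi_\mu(y)$, then $A_{e^{\ri \tau}}(y) = A_g^K(y)$ for some $g \in K_\mu$, so $(e^{\ri \tau}, g^{-1}) \in (U(1) \times K_\mu)_y = \{1\} \times L$, whence $e^{\ri \tau} = 1$. Therefore $A^\red$ has trivial isotropy on a dense open subset of $M_\mu$, and its basic period equals $T$. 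The main obstacle in the whole argument, in my view, is the structural step pinning down the principal $(U(1) \times K_\mu)$-isotropy as $\{1\} \times L$; once this is secured, the rest is a routine transcription of the Poisson reduction proof.
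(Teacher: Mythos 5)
Your proposal is correct and follows essentially the same route as the paper, which simply transcribes the Poisson-reduction argument of Proposition \ref{prop:31} to the level set $J^{-1}(\mu)$: the hypothesis \eqref{321} pins down the principal $(U(1)\times K_\mu)$-isotropy type as $\{1\}\times(K_\mu)_x$, its orbit-type submanifold \eqref{322} is dense and open, and the descended $U(1)$ action is free on its $\pi_\mu$-image. Your ``key structural step'' (conjugation in $U(1)\times K_\mu$ cannot change the $U(1)$-factor, so a single point with trivial $U(1)$-isotropy per component forces the principal isotropy to lie in $\{1\}\times K_\mu$) is exactly the reasoning the paper leaves implicit there, having spelled it out around \eqref{310}; your write-up just supplies more detail than the paper's two-line proof.
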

\begin{proof}
This follows essentially by the same arguments that we applied in the preceding subsection.
The assumption (\ref{321}) implies that  the principal isotropy type submanifold with respect to the action of $U(1)\times K_\mu$ on $J^{-1}(\mu)$ is
\be
J^{-1}(\mu)_*^{U(1) \times K_\mu} = \{ x \in J^{-1}(\mu) \mid (U(1) \times K_\mu)_x = \{ 1\} \times (K_\mu)_x.
\label{322}\ee
The $U(1)$ action descends to $M_\mu$, and the principal isotropy type submanifold for
the resulting  $U(1)$ action is the $\pi_\mu$ projection of the set \eqref{322}.
On this dense open subset of $M_\mu$, the projected $U(1)$ action is free, which means that the
Hamiltonian flow of $H_\mu$ has
the same basic period as does the flow of the original Hamiltonian on $M_*$.
\end{proof}

\begin{rem}
The reduced system \eqref{319} is  not sensitive to the behavior of the original Hamiltonian flow outside $J^{-1}(\mu)$.
For its strongly isochronous nature, it is sufficient  to require that
the solutions of the original Hamiltonian system lying in $J^{-1}(\mu)$ are periodic with the same period over a dense open subset of $J^{-1}(\mu)$,
and \eqref{321} holds for the corresponding $U(1)$ action on $J^{-1}(\mu)$.
\end{rem}

\section{Confining spin Calogero--Moser models from  simple Lie algebras}
\label{S:4}

We here use Poisson reduction for constructing Lie algebraic generalizations of the rational Calogero--Moser model with harmonic term
that involve collective spin degrees of freedom.
This is a variant of the construction of spin Calogero--Moser type models presented in \cite{AKLM,Hoch}, without the harmonic term.
A similar construction of trigonometric models can be found in \cite{Re1}.

\subsection{Definitions and abstract characterization of the reduced system}
Let $\fg$ be a complex simple Lie algebra and denote $\langle -,-\rangle$ its Killing form
(possibly multiplied by a positive constant for convenience).
Regarding $\fg$ as a \emph{real} vector space, we can decompose it as
\be
\fg = \cK + \cP\quad\hbox {with}\quad \cP:= \ri \cK,
\label{41}\ee
where $\cK$ is  a maximal compact subalgebra.  By using the restriction of the Killing form, which  is positive definite on $\cP$,
we identify the real vector space $\cP$ with its dual space.
Then, we consider the phase space
\be
M:= T^*\cP \simeq \cP \times \cP = \{ (X,Y)\mid X,Y\in \cP\}.
\label{42}\ee
The components of $X$ and $Y$  are coordinates on $M$, and the canonical symplecic form
can be written as
\be
\Omega= \langle dY \stackrel{\wedge}{,} dX \rangle.
\label{43}\ee
A natural strongly isochronous system on $M$ is defined by the  Hamiltonian
\be
H(X,Y) = \frac{1}{2} \langle Y, Y \rangle +  \frac{1}{2} \omega^2 \langle X, X \rangle,
\label{44}\ee
 representing an isotropic harmonic  oscillator.
It is useful to collect $X$ and $Y$ in the $\fg$-valued variable
\be
Z:= \omega X - \ri  Y,
\label{45}\ee
since by using $Z$ the Hamiltonian evolution equation takes the form $\dot{Z} = \ri \omega Z$.
The integral curves are
\be
\phi_t(Z)= e^{\ri \omega t} Z, \qquad \phi_0(Z) =Z,
\label{46}\ee
and the corresponding Hamiltonian $U(1)$ action \eqref{24} reads
\be
A_{e^{\ri \tau}}(Z) =  e^{\ri \tau} Z.
\label{47}\ee
This is a free action if we remove the equilibrium point $Z=0$.

Let $K$ be a connected and simply connected compact Lie group whose Lie algebra is $\cK$.
The group $K$ acts on $\cK$ by the adjoint action, which also induces actions on $\cP = \ri \cK$ and on $M$.
We denote the adjoint action by conjugation,  and thus for $g\in K$  have
\be
A_g^K: (X,Y) \mapsto (g X g^{-1}, g Y g^{-1})
\quad \Leftrightarrow \quad A_g^K: Z \mapsto g Z g^{-1}.
\label{49+}\ee
This is a Hamiltonian action,  generated by the  equivariant moment map
\be
J(X,Y) = [X,Y],
\label{49++}\ee
where $\cK^*$ is identified  with $\cK$ with the aid of the Killing form.

\begin{lem}\label{lem:41}
The principal isotropy group for the action \eqref{47} of $K$ on $M$ \eqref{42} is the center $\cZ(K)$ of the group $K$.
\end{lem}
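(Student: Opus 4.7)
The plan is to establish two containments: $\cZ(K) \subseteq K_{(X,Y)}$ at every point $(X,Y) \in M$, and $K_{(X_0,Y_0)} \subseteq \cZ(K)$ at some explicitly constructed $(X_0,Y_0)$. Because $\cZ(K)$ is normal in $K$, its conjugacy class is just $\{\cZ(K)\}$ itself, so these two facts together will identify it as the principal isotropy group. The first containment is immediate: every element of $\cZ(K)$ lies in the kernel of the adjoint representation of $K$ on $\cK$, hence acts trivially on $\cP = \ri \cK$ and on the whole of $M$.

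For the reverse containment, I would fix a Cartan subalgebra $\cT \subset \cK$ with maximal torus $T = \exp(\cT)$, so that $\cA := \ri \cT$ is a maximal abelian subspace of $\cP$ (any $Y\in\cP$ centralizing $\ri\cT$ also centralizes $\cT$ and so lies in $\cT^\bC \cap \ri\cK = \ri\cT$). Next pick $X_0 \in \cA$ regular, meaning $\alpha(X_0)\neq 0$ for every root $\alpha$ of $(\fg,\cT^\bC)$. A standard root-space computation then gives the centralizer subalgebra $z_\cK(X_0) = \cK \cap \cT^\bC = \cT$, and because the Weyl group $N_K(T)/T$ acts freely on regular elements of $\ri \cT$, one in fact has $Z_K(X_0) = T$. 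Consequently the isotropy of $(X_0, Y)$ for any $Y$ reduces to the $T$-stabilizer of $Y$.

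To finish, decompose $\cP = \cA \oplus \bigoplus_{\alpha > 0} V_\alpha$ into real $T$-weight spaces, where each $V_\alpha$ is two-dimensional and $\exp(H)\in T$ acts on it by rotation through angle $\alpha(H)$. Pick $Y_0$ with a nonzero component in every $V_\alpha$; then $t\cdot Y_0 = Y_0$ forces $\alpha(t) = 1$ for every root $\alpha$. Since $K$ is compact, connected and simply connected, the simultaneous kernel of all the roots on $T$ coincides with $\cZ(K)$, and hence $K_{(X_0,Y_0)} = \cZ(K)$, which is the desired reverse containment.

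The steps that require a little care are (i) that regularity of $X_0$ really does exclude any nontrivial Weyl-group component in $Z_K(X_0)$, so that one obtains $Z_K(X_0) = T$ exactly rather than just having Lie algebra $\cT$, and (ii) the appeal to simple connectedness of $K$ when identifying $\bigcap_\alpha \ker(\alpha)$ with $\cZ(K)$; both are classical facts for compact Lie groups that I would invoke without re-proving.
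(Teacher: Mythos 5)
Your proof is correct, but it follows a genuinely different route from the paper's. The paper's argument picks \emph{two} maximal tori $\bT,\bT'<K$ whose intersection is exactly $\cZ(K)$ (a fact it imports from the literature), takes $X\in\ri\cT$ and $Y\in\ri\cT'$ both regular, and reads off $K_{(X,Y)}=\bT\cap\bT'=\cZ(K)$ directly. You instead fix a single maximal torus, use regularity of $X_0$ to get $Z_K(X_0)=\bT$, and then kill the residual $\bT$-stabilizer by choosing $Y_0$ with a nonzero component in every real root plane $V_\alpha\subset\cA_\perp$, so that fixing $Y_0$ forces $\alpha(t)=1$ for all roots and hence $t\in\cZ(K)$. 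Both arguments are complete modulo standard compact-group facts; yours trades the less routine ``two tori meeting only in the center'' input for the entirely standard root-space decomposition of $\cP$ under $\bT$, which makes it more self-contained and elementary. Your framing of why one good point suffices (central elements lie in every isotropy group, and $\cZ(K)$ is its own conjugacy class) is also a welcome explicit justification of the paper's opening sentence ``it is enough to exhibit a single element.'' Two cosmetic remarks: simple connectedness of $K$ is not actually needed to identify $\bigcap_\alpha\ker(\alpha|_\bT)$ with $\cZ(K)$ --- for connected $K$ this intersection is $\ker(\mathrm{Ad})\cap\bT=\cZ(K)$ already; and since $\alpha(H)$ is purely imaginary for $H\in\cT$, the rotation angle on $V_\alpha$ is really $-\ri\alpha(H)$, though this does not affect the argument.
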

\begin{proof}
It is enough to exhibit a single element $Z$ \eqref{45} whose isotropy group is $\cZ(K)$.
To do this, consider two  maximal Abelian subalgebras $\cT$ and $\cT'$ of $\cK$ for which the
corresponding maximal tori $\bT$ and $\bT'$ of $K$ satisfy
\be
\bT \cap \bT' = \cZ(K).
\label{410}\ee
The existence of such subalgebras is well known, see e.g. \cite{Mein}. Let $ X\in \ri \cT$ and $ Y\in \ri \cT'$ be regular elements,
whose isotropy groups with respect to the  action of $K$ on $\cP$ are $\bT$ and $\bT'$, respectively.
Then, $Z = \omega X - \ri Y$ has the required property.
\end{proof}

\begin{thm}\label{thm:42}
Suppose that $\rank(\cK)>1$.
Then, the principal isotropy group for the action \eqref{37}  of $U(1) \times K$ on $M_*^K\subset M$ \eqref{42}  is $\{1\} \times \cZ(K)$.
Consequently, the Poisson reduction of the harmonic oscillator  \eqref{44} restricted on $M_*^K$ yields a strongly isochronous
(and thus maximally superintegrable)
Hamiltonian system on $M_*^\red = M_*^K/K$, with  period $T= 2\pi/\omega$.
\end{thm}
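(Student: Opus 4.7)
The plan is to invoke Proposition~\ref{prop:31}: it suffices to exhibit a single point $Z_* \in M_*^K$ at which the $U(1)\times K$ isotropy reduces to $\{1\}\times K_{Z_*}$. Lemma~\ref{lem:41} supplies a natural candidate $Z_* = \omega X_* - \ri Y_*$ with $X_* \in \ri\cT$, $Y_* \in \ri\cT'$ regular, $\bT\cap\bT' = \cZ(K)$, and hence $K_{Z_*} = \cZ(K)$. The remaining task is to verify that no pair $(e^{\ri\tau}, g)$ with $e^{\ri\tau} \neq 1$ lies in the isotropy. Unpacking the action \eqref{37}, this condition reads $g Z_* g^{-1} = e^{-\ri\tau} Z_*$, that is, conjugation by $g$ acts on $Z_*$ as multiplication by $\lambda := e^{-\ri\tau}$.

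The first step uses the $K$-invariance of the complex-bilinear Killing form $\langle\cdot,\cdot\rangle$ on $\fg$: applying it to $g Z_* g^{-1} = \lambda Z_*$ gives $\lambda^2 \langle Z_*, Z_*\rangle = \langle Z_*, Z_*\rangle$. A direct calculation yields
\[
\langle Z_*, Z_*\rangle = \omega^2 \langle X_*, X_*\rangle - \langle Y_*, Y_*\rangle - 2\ri\omega\langle X_*, Y_*\rangle.
\]
After rescaling $X_*$ (which preserves regularity and the condition $\bT\cap\bT'=\cZ(K)$) we may ensure $\omega^2 \langle X_*, X_*\rangle \neq \langle Y_*, Y_*\rangle$, so $\langle Z_*, Z_*\rangle \neq 0$ and therefore $\lambda^2 = 1$. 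Thus $\tau \in \{0, \pi\}$ modulo $2\pi$, and the case $\tau = 0$ forces $g\in K_{Z_*} = \cZ(K)$, as desired.

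The heart of the argument is to rule out $\tau = \pi$. Separating $g Z_* g^{-1} = -Z_*$ into its $\cP$ and $\cK$ components produces the simultaneous constraints $g X_* g^{-1} = -X_*$ and $g Y_* g^{-1} = -Y_*$. If $-1 \notin W(\cK,\cT)$, then no element of $K$ sends regular $X_*$ to $-X_*$, and we are done at once. When $-1 \in W$, fix a lift $g_0 \in N_K(\bT)$ of the longest Weyl element; then $\{g : g X_* g^{-1} = -X_*\} = g_0 \bT$, so the second constraint reduces to $t Y_* t^{-1} = -g_0^{-1} Y_* g_0$ for some $t \in \bT$. The hypothesis $\rank(\cK) > 1$ is essential here: it affords enough freedom in the choice of $\cT'$ and of the regular element $Y_* \in \ri\cT'$ (still subject to $\bT\cap\bT'=\cZ(K)$) so that the $\bT$-conjugation orbit of $Y_*$ — an $r$-dimensional torus in $\cP$ — misses the specific element $-g_0^{-1} Y_* g_0 \in \cP$. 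This is the technical crux of the argument and the reason rank~$1$ is excluded; for $\cK$ of rank~$1$, the analogous equation is always solvable and the principal isotropy at $Z_*$ would be strictly larger than $\{1\}\times\cZ(K)$.

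Granted the above, the chosen $Z_*$ satisfies $(U(1)\times K)_{Z_*} = \{1\}\times\cZ(K)$, verifying the sufficient condition \eqref{310}. Proposition~\ref{prop:31} then implies that the Poisson reduction of the harmonic oscillator~\eqref{44} to $M_*^{\red} = M_*^K/K$ is strongly isochronous with the same basic period $T = 2\pi/\omega$, and Theorem~\ref{thm:22} yields the asserted maximal superintegrability.
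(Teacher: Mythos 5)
Your overall strategy coincides with the paper's: reduce to exhibiting a single point $Z_*=\omega X_*-\ri Y_*$ built from the two tori of Lemma~\ref{lem:41}, use invariance of the bilinear form to force $e^{-2\ri\tau}=1$ after arranging $\langle Z_*,Z_*\rangle\neq 0$, and then rule out $e^{\ri\tau}=-1$. Everything up to the pair of constraints $gX_*g^{-1}=-X_*$, $gY_*g^{-1}=-Y_*$, and the quick dispatch of the case $-\mathrm{id}\notin N_K(\cT)/\bT$, is correct and essentially identical to the paper.

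The genuine gap sits exactly where you label "the technical crux": in the case $-\mathrm{id}\in N_K(\cT)/\bT$ you correctly reduce matters to showing that $tY_*t^{-1}=-g_0^{-1}Y_*g_0$ has no solution $t\in\bT$ for a suitable choice of $\cT'$ and $Y_*$, and then you simply assert that $\rank(\cK)>1$ "affords enough freedom." That assertion is the whole remaining content of the proof and is not established. It is also not a routine genericity count: the point $-g_0^{-1}Y_*g_0$ to be avoided moves with $Y_*$, and solvability is equivalent to the existence of $h\in K$ with $hZ_*h^{-1}=-Z_*$ --- precisely the phenomenon that occurs at \emph{every} point when $\rank(\cK)=1$ (Remark~\ref{rem:43}), so one needs an actual obstruction (for instance a nonvanishing $K$-invariant of odd total degree in $(X,Y)$, which would forbid conjugating $(X_*,Y_*)$ to $(-X_*,-Y_*)$), not a plausibility argument. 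Note that the paper closes this step differently, by a condition on $X$ alone: it chooses $X\in\ri\cT$ that is not a $-1$ eigenvector of any Weyl group element, so that $gXg^{-1}=-X$ is already impossible; such a choice exists precisely when $-\mathrm{id}$ is not in the Weyl group (e.g.\ type $A_{n-1}$ with $n\geq 3$), so your instinct that the constraint on $Y_*$ must be invoked for the remaining root systems is sound --- but it then has to be proved, and as written your argument establishes the theorem only under the additional hypothesis that $-\mathrm{id}$ is not a Weyl group element.
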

\begin{proof}
Since $M_*^K$ is connected, it is sufficient to find a single point  $(X,Y) \in M_*^K$, encoded by $Z$ \eqref{45}, for which
\be
(U(1) \times K)_Z = \{1\} \times \cZ(K).
\label{411}\ee
Now, choose $Z= \omega X - \ri Y $ used in the proof of Lemma \ref{lem:41} in such a way
that $\langle Z, Z\rangle \neq 0$ and suppose that $(e^{\ri \tau}, g) \in (U(1) \times K)_Z$.
Then, we have $e^{-\ri \tau}  Z = g Z g^{-1}$, and thus
\be
e^{-2 \ri \tau} \langle Z, Z\rangle = \langle e^{- \ri \tau}  Z, e^{- \ri \tau} Z \rangle =
\langle g Z g^{-1}, g Z g^{-1} \rangle = \langle Z, Z\rangle
\label{412}\ee
entails that $e^{-2 \ri \tau} = 1$. If $e^{-\ri \tau}$ was equal to $-1$, then we would get
\be
g X g^{-1} =- X \quad \hbox{and}\quad g Y g^{-1} = - Y.
\label{413}\ee
This in turn would entail that $g \in N_K(\cT)$ and $g \in N_K({\cT'})$.
If $\rank(\cK) > 1$, i.e., $\cK \neq {\mathfrak{su}}(2)$, then we can choose $X\in \ri \cT$ so that it is not an eigenvector with eigenvalue $-1$
for any element of the Weyl group $N_K(\cT)/\bT$. This choice guarantees that $e^{\ri \tau}=1$, and  therefore \eqref{411}
holds.
\end{proof}

\begin{rem}\label{rem:43}
Suppose that $\rank(\cK)=1$, i.e., $K=SU(2)$. If $(X,Y) \in M_*^K$, then
it can be shown that there exists $g\in K$ satisfying $g Z g^{-1} = - Z$.
This $g$ is unique up to
multiplication by -1, and it satisfies $g^{-1} = - g$.
Moreover, the isotropy  group of $(X,Y) \in M_*^{U(1)\times K}$ is
the group containing the four elements
\be
(1,\1_2),\quad (1,-\1_2), \quad (-1, g),\quad (-1,-g).
\label{414}\ee
One still obtains a strongly isochronous system on $M^K_*/K$, but its basic period is half that of
the unreduced system.
 We shall see that no spin degrees of freedom arise in this case.
\end{rem}

\subsection{Concrete description of the reduced system}

As a piece of preparation, we need the orthogonal direct sum decompositions
\be
\cK = \cT + \cT_\perp
\quad\hbox{and} \quad
\cP = \cA + \cA_\perp,
\label{415}\ee
where $\cT<\cK$ is a maximal Abelian subalgebra and $\cA = \ri \cT$.
The dense open subset $\cA_\reg\subset \cA$ contains the elements whose isotropy subgroup with respect to
the action of $K$ on $\cP$ is the maximal torus $\bT < K$ associated with $\cT$, and
$\cP_\reg \subset \cP$ is filled by the $K$ orbits intersecting $\cA_\reg$.
From now on, we focus on the \emph{dense open submanifold} ${\check M}_*^K \subset M_*^K$ defined by
\be
{\check M}_*^K:= \{(X,Y) \in M_*\mid X\in \cP_\reg\}.
\label{416}\ee
We recall from Lie theory that the connected components of $\cA_\reg$ are permuted by
the Weyl group
\be
W= N_K(\cT)/\bT,
\label{417}\ee
and fix a connected component $\cC \subset \cA_\reg$, alias an open Weyl chamber.
By using $\cC$, we introduce the `gauge slice'
\be
\cS:= \{ (q, Y) \in {\check M}_*^K \mid q \in \cC\}.
\label{418}\ee
Defining
\be
{\check M}_*^\red :={\check  M}_*^K /K,
\label{419}\ee
we obtain the identification
\be
{\check M}_*^\red \equiv \cS/\bT,
\label{420}\ee
since every $K$-orbit in ${\check M}_*^K$ intersects $\cS$ in an orbit of $\bT<K$.
At the level of rings of smooth functions, this  is equivalent to the isomorphism
\be
C^\infty( {\check M}_*^\red) \equiv C^\infty( {\check M}_*^K)^K \Longleftrightarrow C^\infty(\cS)^\bT.
\label{421}\ee
Thus, we can describe the observables of the (restricted) reduced system as $\bT$-invariant smooth functions
on the gauge slice $\cS$.

Next, we deal with the reduced Poisson bracket on  $C^\infty(\cS)^\bT$,  denoted $\{-,-\}_\cS$.
Its defining equality is
 \be
 \{ \iota^*(f), \iota^*(h) \}_\cS := \iota^* \{f,h\},
 \quad
 \forall f,h \in C^\infty({\check M}_*^K)^K,
\label{422} \ee
 where $\iota: \cS \to {\check M}_*^K$ is the tautological inclusion and $\{f,h\}$ is determined by the restriction
 of the symplectic form $\Omega$  (\ref{43})  on ${\check M}_*^K$.
For any $f \in C^\infty(M)$, denote $\nabla_1f$ and $\nabla_2f$  the $\cP$-valued partial gradients with
respect to the first and second variable. The Poisson bracket corresponding to $\Omega$ has the form
\be
\{ f,h\} = \langle \nabla_1 f, \nabla_2 h \rangle - \langle \nabla_2 f, \nabla_1 h \rangle,
\label{423}\ee
and of course the same formula is valid over the dense open submanifold ${\check M}^K_* \subset M$.
For a function $F\in C^\infty(\cS)$, $\nabla_2 F$ is $\cP$-valued, but $\nabla_1 F$ is $\cA$-valued.
To be clear, the latter is defined by the requirement
\be
\langle V, \nabla_1 F(q,Y) \rangle = \dt F(q+ t  V,Y), \qquad \forall V\in \cA,
\label{424}\ee
which makes sense since $(q+ t V,Y)$ belongs to $\cS$ \eqref{418} for small enough $t$.

For any $q\in \cA_\reg$, $\ad_q := [q,\,\cdot\,] \in \mathrm{End}(\fg)$ restricts to an invertible linear map of
$(\cT_\perp + \cA_\perp)$ onto itself, where we refer to the decompositions (\ref{41}) and (\ref{415}).
By definition, the linear map
\be
r(q): \fg  \to \fg
\label{425}\ee
operates on $(\cT_\perp + \cA_\perp)$ as the inverse of the pertinent restriction of $\ad_q$,
and $r(q)$ is set to be zero on $(\cT + \cA)$.
To display $r(q)$ explicitly, denote $\Delta= \{\alpha\}$ the set of roots of the complex Lie algebra $\fg$
with respect to the Cartan subalgebra $\cT^\bC$
 and
choose root vectors $E_\alpha\in \fg $ in such a way that
\be
E_{-\alpha} = (E_\alpha)^\dagger
\quad \hbox{and}\quad
 \langle E_\alpha, E_{-\alpha} \rangle = \frac{2}{\vert \alpha \vert^2}
\label{426}\ee
hold, where  dagger denotes complex conjugation on $\fg$ with respect to $\cK$.
Any element $\xi$ of $(\cT_\perp + \cA_\perp)$ can be expanded in the form
\be
\xi=\sum_{\alpha\in \Delta} \xi^\alpha E_\alpha,
\quad \xi^{-\alpha} = - (\xi^\alpha)^*
\quad \hbox{if}\quad \xi\in \cT_\perp
\quad\hbox{and}\quad \xi^{-\alpha} = (\xi^\alpha)^*
\quad\hbox{if}\quad \xi\in \cA_\perp.
\label{427}\ee
With this notation, we have
\be
r(q)( \xi) = \sum_{\alpha \in \Delta} \frac{\xi^\alpha}{\alpha(q)} E_\alpha .
\label{428}\ee
Observe that $r(q)$ is antisymmetric,
\be
\langle r(q) U, V \rangle = - \langle U, r(q) V \rangle, \qquad \forall U,V \in \fg,
\label{429}\ee
and it maps $\cT_\perp$ onto $\cA_\perp$  and $\cA_\perp$ onto $\cT_\perp$.

\begin{prop}\label{prop:44}
The reduced Poisson bracket (\ref{422}) on $C^\infty(\cS)^\bT$  obeys the explicit formula
\be
\{F,H\}_\cS(q,Y) = \langle \nabla_1 F, \nabla_2 H \rangle  - \langle \nabla_2 F, \nabla_1 H \rangle
+ \langle Y,  [r(q) \nabla_2 F, \nabla_2 H ] + [\nabla_2 F, r(q) \nabla_2 H] \rangle,
\label{430}\ee
with the derivatives taken at $(q,Y)\in \cS$ (\ref{418}).
\end{prop}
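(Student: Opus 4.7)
The plan is to extend $F, H\in C^\infty(\cS)^\bT$ to $K$-invariant smooth functions $f,h\in C^\infty(\check M_*^K)^K$ with $\iota^* f=F$ and $\iota^* h=H$, which is possible because each $K$-orbit in $\check M_*^K$ meets $\cS$ in a single $\bT$-orbit (so $K$-averaging any $\bT$-invariant extension works). By the definition \eqref{422} of the reduced bracket, it suffices to evaluate $\{f,h\}$ at an arbitrary point $(q,Y)\in\cS$ using the ambient formula \eqref{423}, and then to express the result in terms of the partial gradients of $F$ and $H$.

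The first key step is to pin down $\nabla_1 f(q,Y)$ and $\nabla_2 f(q,Y)$ in terms of $F$. Since variations of $q$ in the $\cA$ direction and of $Y$ in any $\cP$ direction keep the point inside the slice $\cS$, one reads off $\nabla_2 f(q,Y)=\nabla_2 F(q,Y)\in\cP$ and $(\nabla_1 f(q,Y))_\cA=\nabla_1 F(q,Y)\in\cA$, while the orthogonal piece $(\nabla_1 f(q,Y))_{\cA_\perp}$ is not visible from $F$ alone. It is determined, however, by the infinitesimal $K$-invariance condition
\be
\langle [\xi,q],\nabla_1 f\rangle+\langle [\xi,Y],\nabla_2 f\rangle=0,\qquad \forall \xi\in\cK.
\label{KinvPlan}\ee
Splitting $\cK=\cT+\cT_\perp$, the $\xi\in\cT$ part reduces to the $\bT$-invariance of $F$ in the $Y$ variable and is automatically satisfied, while the $\xi\in\cT_\perp$ part determines the missing component.

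For the latter, parametrize $\xi=r(q)U$ with $U\in\cA_\perp$; this gives a bijection $\cA_\perp\to\cT_\perp$ because $r(q)$ inverts $\ad_q$ on $\cT_\perp+\cA_\perp$, and $[\xi,q]=-\ad_q r(q)U=-U$. Applying ad-invariance of the Killing form to the second term of \eqref{KinvPlan} and the antisymmetry \eqref{429} of $r(q)$, one finds $(\nabla_1 f(q,Y))_{\cA_\perp}=r(q)[\nabla_2 F,Y]$, which lies in $\cA_\perp$ because $r(q)$ annihilates $\cT+\cA$ and maps $\cT_\perp$ to $\cA_\perp$.

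The last step is a direct substitution into \eqref{423}: the `diagonal' terms reproduce $\langle\nabla_1 F,\nabla_2 H\rangle-\langle\nabla_2 F,\nabla_1 H\rangle$, and the two cross terms involving $r(q)$ are converted into the symmetric $Y$-dependent expression of the proposition by successive uses of the antisymmetry of $r(q)$ and the ad-invariance identity $\langle[A,B],C\rangle=\langle A,[B,C]\rangle$. The main (mild) obstacle is just bookkeeping of signs and of which component lives in which subspace of $\cP$ or $\cK$; the existence of the $K$-invariant extension and the verification that the resulting formula is independent of the extension follow transparently from the uniqueness of $(\nabla_1 f)_{\cA_\perp}$ dictated by \eqref{KinvPlan}.
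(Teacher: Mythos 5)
Your proposal is correct and follows essentially the same route as the paper: extend $F,H$ to the unique $K$-invariant functions $f,h$ on $\check M_*^K$, use the infinitesimal invariance identity (equivalent to $[q,\nabla_1 f]+[Y,\nabla_2 f]=0$) to solve for $(\nabla_1 f)_{\cA_\perp}=r(q)[\nabla_2 F,Y]$, and substitute into the ambient bracket using the antisymmetry of $r(q)$ and ad-invariance. Your explicit pairing with $\xi=r(q)U$, $U\in\cA_\perp$, is just a slightly more detailed rendering of the same inversion step the paper performs.
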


\begin{proof}
Referring to the decomposition (\ref{415}), notice that
only ${(\nabla_2 F)}_{\cA}$ and ${(\nabla_2 H)}_{\cA}$ appear in the first two terms of \eqref{430}, since $\nabla_1 F$ and $\nabla_1H$ belong to $\cA$.
Any $F\in C^\infty(\cS)^\bT$ is the restriction of a unique function $f\in C^\infty({\check M}_*^K)^K$, and
the $K$-invariance of $f$ implies that
\be
f(q,Y) = f(e^{t U} q e^{-tU}, e^{tU} Y e^{-tU}), \qquad \forall (q,Y)\in \cS,\,\, U\in \cK,\,\, t \in \bR.
\label{431}\ee
This leads to the identity
\be
[q, \nabla_1 f(q,Y)] + [Y, \nabla_2 f(q,Y)] =0,
\label{432}\ee
which permits us to express $(\nabla_1 f(q,Y))_{\cA_\perp}$ in terms of $\nabla_2 f(q,Y)$.
Taking into account the obvious relations
\be
\nabla_2 f(q,Y) = \nabla_2 F(q,Y)
\quad \hbox{and}\quad
(\nabla_1 f(q,Y))_\cA = \nabla_1 F(q,Y),
\label{433}\ee
we obtain
\be
\nabla_1 f(q,Y) = \nabla_1 F(q,Y) - r(q) [Y, \nabla_2 F(q,Y)].
\label{434}\ee
We expressed the derivatives of $f$ at $(q,Y)\in \cS$ in terms of the derivatives of
the restricted function $F= \iota^*(f)$.
We  substitute these expressions and their counterparts for $H\in C^\infty(\cS)^\bT$ into the definition  (\ref{422}), i.e., into
$\{ F,H\}_\cS(q,Y) = \{ f,h\}(q,Y)$.
It is then straightforward to derive (\ref{430}) from  (\ref{423}) utilizing the antisymnetry of $r(q)$.
   \end{proof}

The physical interpretation of the model will be transparent in terms of new variables.
Namely, we introduce new variables $(q,p,\xi)$ instead of $(q,Y)$ by means of the following
diffeomorphism:
\be
m: \cC \times \cA \times \cT_\perp \to \cC \times \cP,
\quad
m: (q,p, \xi) \mapsto (q, p - r(q) \xi),
\label{435}\ee
i.e., by parametrizing
$Y\in \cP$ as
\be
Y(q,p,\xi) = p - r(q)\xi.
\label{Ypar}\ee
This is a $\bT$-equivariant map if $\bT$ acts by conjugations on $Y$ as well as on $\xi\in \cT_\perp$.

For any real function $\cF \in C^\infty(\cC \times \cA \times \cT_\perp)$, we denote its partial gradients by
$\nabla_q \cF, \nabla_p \cF$, which are $\cA$-valued, and $\nabla_\xi \cF$, which is $\cT_\perp$-valued,
\be
\langle U, \nabla_\xi \cF(q,p,\xi)\rangle :=\dt \cF(q,p, \xi + t U),  \qquad
\forall U \in \cT_\perp.
\label{436}\ee
The reduced Poisson bracket in terms of the new variables is determined by the equality
 \be
 \{ F \circ m, H\circ m \}_\red := \{ F,H\}_\cS \circ m,
 \qquad
 \forall F,H \in C^\infty(\cS)^\bT.
 \label{437}\ee

\begin{prop}\label{prop:45}
In the variables $(q,p,\xi)$ introduced using the map $m$ (\ref{435}), the reduced Poisson bracket (\ref{437}) is given explicitly by
\be
\{\cF, \cH\}_\red(q,p,\xi) = \langle \nabla_q \cF, \nabla_p \cH \rangle - \langle \nabla_p \cF, \nabla_q \cH \rangle +
\langle \xi, [\nabla_\xi \cF, \nabla_\xi \cH] \rangle
\label{438}\ee
for all $\cF, \cH \in C^\infty(\cC\times \cA \times \cT_\perp)^\bT$, where the derivatives are evaluated at $(q,p,\xi)$.
The reduction  of the Hamiltonian  (\ref{44}) yields
\be
\cH_\red(q,p,\xi) := \frac{1}{2} \langle Y(q,p,\xi), Y(q,p,\xi)\rangle + \frac{1}{2} \omega^2 \langle q,q  \rangle,
\label{439}\ee
with $Y(q,p,\xi)$ in \eqref{Ypar}, and this
 can be spelled out as
\be
\cH_\red(q,p,\xi) = \frac{1}{2} \langle p, p\rangle +  \frac{1}{2} \omega^2 \langle q, q\rangle +
\sum_{\alpha \in \Delta_+} \frac{2}{\vert \alpha \vert^2} \frac{\vert \xi^\alpha \vert^2}{\alpha(q)^2},
\label{440}\ee
where the expansion $\xi = \sum_{\alpha \in \Delta} \xi^\alpha E_{\alpha}$ is used  (\ref{427}), and $\Delta_+$ is the set of positive roots.
\end{prop}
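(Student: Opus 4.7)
\emph{Proof plan.} The idea is to push the formula of Proposition \ref{prop:44} through the change of variables $m$ in \eqref{435} and to compute the reduction of the oscillator Hamiltonian directly.

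First, I would express the old gradients $\nabla_1 F, \nabla_2 F$ in terms of the new ones $\nabla_q \cF, \nabla_p \cF, \nabla_\xi \cF$ for $\cF = F\circ m$. Using the antisymmetry \eqref{429} of $r(q)$ together with the relation $\ad_q\circ r(q) = I$ on $\cT_\perp + \cA_\perp$, a direct differentiation of $\cF(q,p,\xi) = F(q, p - r(q)\xi)$ gives
\be
\nabla_2 F = \nabla_p \cF + [q, \nabla_\xi \cF], \qquad r(q)\nabla_2 F = \nabla_\xi \cF.
\ee
For the $q$-gradient, the $q$-dependence of $r(q)$ produces a correction that, via the differentiated identity $\partial_V r(q) = -r(q)\,\ad_V\, r(q)$ on $\cT_\perp+\cA_\perp$ and antisymmetry of $r(q)$, yields $\nabla_1 F = \nabla_q \cF + ([r(q)\xi, \nabla_\xi \cF])_\cA$; only the $\cA$-component is needed since $\nabla_1 F\in \cA$ by \eqref{424}.

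Next, I would substitute these relations into \eqref{430} with $Y = p - r(q)\xi$. The first two terms of \eqref{430} pair $\nabla_1 F, \nabla_1 H \in \cA$ against the $\cA$-parts of $\nabla_2 H, \nabla_2 F$, producing $\langle \nabla_q\cF, \nabla_p\cH\rangle - \langle \nabla_p\cF, \nabla_q\cH\rangle$ together with two extra contributions of the shape $\langle r(q)\xi, [\nabla_\xi\cF, \nabla_p\cH]\rangle$ and $\langle r(q)\xi, [\nabla_p\cF, \nabla_\xi\cH]\rangle$, both obtained by cyclicity of the Killing form. In the third term of \eqref{430}, the Jacobi identity
\be
[\nabla_\xi\cF, [q, \nabla_\xi\cH]] + [[q, \nabla_\xi\cF], \nabla_\xi\cH] = [q, [\nabla_\xi\cF, \nabla_\xi\cH]]
\ee
repackages the $\nabla_\xi$--$\nabla_\xi$ piece; pairing against $Y = p - r(q)\xi$ and using $\cA\perp\cA_\perp$ together with $[p,q]=0$ to discard all $\langle p,\cdot\rangle$ contributions leaves exactly the opposites of the two extra $\langle r(q)\xi,\cdot\rangle$ terms, plus $-\langle r(q)\xi, [q, [\nabla_\xi\cF, \nabla_\xi\cH]]\rangle$, which by cyclicity and the key identity $[q, r(q)\xi] = \xi$ equals $\langle \xi, [\nabla_\xi\cF, \nabla_\xi\cH]\rangle$. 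The $r(q)\xi$-pieces cancel, yielding \eqref{438}.

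The explicit form \eqref{440} follows by expanding $\cH_\red = \frac{1}{2}\langle p - r(q)\xi, p - r(q)\xi\rangle + \frac{1}{2}\omega^2\langle q,q\rangle$. The mixed term $\langle p, r(q)\xi\rangle$ vanishes because $\cA\perp \cA_\perp$, while $\frac{1}{2}\langle r(q)\xi, r(q)\xi\rangle$ is evaluated via the root expansion \eqref{428} and the normalization \eqref{426}; combining the contributions of $\pm\alpha$ through the reality condition $\xi^{-\alpha} = -(\xi^\alpha)^*$ for $\xi\in\cT_\perp$ collapses the sum over $\Delta$ to a sum over $\Delta_+$ with prefactor $2/\vert\alpha\vert^2$. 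The main obstacle is the careful bookkeeping of the many cross terms: the spin-bracket term $\langle \xi, [\nabla_\xi\cF, \nabla_\xi\cH]\rangle$ is not manifest and only surfaces after the Jacobi identity converts a bracket containing $q$ into one that, paired with $-r(q)\xi$, produces $\xi$ via $[q, r(q)\xi] = \xi$.
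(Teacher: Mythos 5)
Your proposal is correct and follows exactly the paper's route: you derive the same chain-rule relations $\nabla_2 F = \nabla_p\cF + [q,\nabla_\xi\cF]$ and $\nabla_1 F = \nabla_q\cF + [r(q)\xi,\nabla_\xi\cF]_\cA$ as in \eqref{442}--\eqref{443}, substitute them into \eqref{430}, and expand $\langle Y,Y\rangle$ via the root normalization \eqref{426}. The only difference is that you spell out the cancellations (Jacobi identity, $[q,r(q)\xi]=\xi$, orthogonality of $\cA$ and $\cA_\perp$) that the paper declares ``straightforward'' and omits; these details all check out.
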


\begin{proof}
Suppose that $\cF = F \circ m$, which means that
\be
\cF(q,p,\xi) = F(q, p - r(q)\xi),
\qquad
\forall q,p,\xi \in \cC \times \cA \times \cT_\perp.
\label{441}\ee
A routine calculation leads to the following relation of the derivatives of the functions $\cF$ and $F$:
\be
\nabla_2 F(q, Y(q,p,\xi)) = \nabla_p \cF(q,p,\xi) + [q, \nabla_\xi \cF(q,p,\xi)]
\label{442}\ee
and
\be
\nabla_1 F(q, Y(q,p,\xi)) = \nabla_q \cF(q,p,\xi) + [ r(q)\xi, \nabla_\xi \cF]_\cA,
\label{443}\ee
where the last subscript refers to the decomposition $\cP = \cA + \cA_\perp$.
Then, the formula (\ref{430}) is converted into (\ref{438}) by
direct substitution of the above relations, and the similar relations for $\cH = H \circ m$,
into the defining equality
\be
\{ \cF, \cH\}_\red(q,p,\xi) = \{ F,H\}_\cS (q, p- r(q) \xi).
\label{444}\ee
 The required manipulations are straightforward, and are thus omitted.
\end{proof}

The reduced Hamiltonian (\ref{440}) describes $N ={ \mathrm{dim}}(\cA)$ `point particles' moving along a line in an external
harmonic potential,  interacting according  to the inverse square potential associated with the root system,
and also interacting  with the `collective spin variable' $\xi$.
One may extend the position variable $q$ from the fixed Weyl chamber  $\cC$ to $\cA_\reg$, which enlarges the residual gauge symmetry
from $\bT$ to the normalizer $N_K(\cA)$.
The third term of the Poisson bracket \eqref{438} can be recognized as the reduction of Lie--Poisson bracket on $\cK^* \simeq \cK$
that arises by setting to zero the moment map of the natural $\bT$-action on $\cK^*$
(which is just the component $\xi_\cT$ of $\xi \in \cK$).

By applying them to $\xi$, the invariant functions $C^\infty(\cK)^K$ give rise to Casimir functions of the reduced Poisson bracket.
In the rank 1 case $\xi^\alpha\neq 0$ can be made positive by the $\bT$ gauge transformations, and its value then depends only
on the quadratic Casimir  $\langle \xi, \xi \rangle$. Thus, in this case, the `spin' $\xi$ is not a proper dynamical variable.

Finally, it is worth noting that the map $m$ (\ref{435}) was not  guessed, but has its natural origin in the alternative treatment
of the Poisson reduction of $T^* \cP$ by means of the so-called shifting trick \cite{OR}.   This works by first considering
the extended phase space $T^*\cP \times \cK^*$, equipped with the natural diagonal action of $K$.
Then one sets the corresponding moment map, $J(X,Y,\xi) = [X,Y] + \xi$, to zero, and discovers the formula (\ref{Ypar})
of $Y(q,p,\xi)$ after gauge fixing $X$ to (a connected component of) $\cA_\reg$.
The formula \eqref{Ypar} also appears in the construction  of spin Calogero--Moser models based on dynamical
$r$-matrices \cite{LX}. Indeed,  $r(q)$ in (\ref{425}) is a restriction  on $\cA_\reg\subset \cT^\bC_\reg$ of the standard rational dynamical
$r$-matrix of $\fg$.

\section{Confining spin Calogero--Moser model of Gibbons--Hermsen type}
\label{S:5}

Now we wish to demonstrate the strongly isochronous nature of the standard spin Calogero--Moser model with harmonic term,
which on a dense open subset of its phase space is governed by the Hamiltonian
\be
\cH_{\mathrm{spin}}(q,p, \zeta) =\frac{1}{2}\sum_{i=1}^n (p_i^2 + \omega^2 q_i^2) +
\frac{1}{2} \sum_{ i\neq j} \frac{\vert \zeta_i \zeta_j^\dagger\vert^2}{(q_i - q_j)^2}\,.
\label{51}\ee
Here, $q_i$ and  $p_i$ $(i=1,\dots, n)$ are interpreted as positions and momenta of $n$ `point particles' moving on the real line.
Each particle carries an $\ell$-dimensional ($\ell>1$) complex spin vector, $\zeta_i$, which are collected in the rows of the matrix $\zeta\in \bC^{n\times \ell}$.
More precisely, the length of each spin vector $\zeta_i$ is fixed to the same positive value (see (\ref{514}) below), and these degrees of freedom matter up to the gauge transformations
\be
(Q,P,\zeta) \mapsto (g Q g^{-1}, g P g^{-1}, g\zeta), \qquad \forall g \in \fN,
\label{52}\ee
where
\be
Q:= \diag(q_1,\dots, q_n),
\quad P:= \diag(p_1,\dots, p_n),
\label{53}\ee
and
\be
\fN := N_{U(n)}(\cT)
\label{54}\ee
 is the normalizer of the standard  Cartan subalgebra  $\cT < \mathfrak{u}(n)$.
The full phase space of the model results from the Hamiltonian reduction summarized below, which goes back
to Gibbons and Hermsen \cite{GH}. (In \cite{GH} the  complex case was considered, actually without the harmonic term.)
If $\ell=1$, then the spin degrees of freedom disappear via the gauge transformations,
and in this case the strongly isochronous character of the Hamiltonian \eqref{51} was established by Adler \cite{Ad}.

Let us denote $\cP:= \ri \mathfrak{u}(n)$.
The starting point for the derivation of the model \eqref{51} is the phase space
\be
M:= \cP \times \cP \times \bC^{n\times \ell} = \{(X,Y, \zeta)\}
\label{55}\ee
equipped with the symplectic form $\Omega$ and the strongly isochronous Hamiltonian $H$:
\be
\Omega= \tr\left( dY \wedge dX -\ri d\zeta \wedge d\zeta^\dagger\right),\quad
H(X,Y,\zeta) = \frac{1}{2} \langle Y, Y \rangle +  \frac{1}{2} \omega^2 \langle X, X \rangle,
\label{56}\ee
where we use the trace form $\langle U,V\rangle := \tr(UV)$  on $\mathfrak{gl}(n,\bC) = \mathfrak{u}(n)^\bC$ and its relevant subspaces
(cf. equations  \eqref{43} and \eqref{44}).
Combining $X$ and $Y$ in the new variable $Z:=\omega X - \ri Y$, like in (\ref{45}), the Hamiltonian flow through the initial value $(Z,\zeta)$ gives
\be
\phi_t(Z,\zeta)= (e^{\ri \omega t} Z, \zeta).
\label{56+}\ee
This is periodic with period $T = \frac{2\pi}{\omega}$,
and thus  the definition \eqref{24} yields the $U(1)$ action
\be
A_{e^{\ri \tau}}(Z,\zeta) =  (e^{\ri \tau} Z,\zeta).
\label{56++}\ee

The system $(M,\Omega,H)$  is reduced utilizing the Hamiltonian $U(n)$ action
\be
A^{U(n)}_g: (X,Y, \zeta) \mapsto (gX g^{-1}, g Y g^{-1}, g\zeta), \quad \forall g\in U(n).
\label{57}\ee
The pertinent moment map is provided by
\be
J(X,Y,\xi) = [X,Y] + \ri \zeta \zeta^\dagger,
\label{58}\ee
where $\mathfrak{u}(n)^*$ is identified with $\mathfrak{u}(n)$  by means of the trace form.
The reduction is defined by constraining $J$  to a multiple of the unit matrix, $\ri c \1_n$ with
an arbitrarily chosen positive constant $c$.
It is known from the analysis  of the corresponding complex holomorphic case \cite{Wil} that $U(n)$ acts \emph{freely} on
 the `constraint surface'
\be
M^c:= J^{-1}(\ri c \1_n),
\label{59}\ee
 which is therefore
a closed, embedded submanifold of $M$. It is also known that $M^c$ is connected\footnote{The connectedness follows, for example, by combining  Theorem 2.2
and  case iii) of Theorem 4.2 in \cite{Knop}.}.
We shall apply proposition \ref{prop:34}  to show that the reduced system,  denoted
\be
(M_\red, \Omega_\red, H_\red) \quad \hbox{with}\quad M_\red := M^c/U(n),
\label{510}\ee
is strongly isochronous.

Let $\mathfrak{u}(n)_\reg\subset \mathfrak{u}(n)$ be the subset of regular elements having $n$ distinct eigenvalues.
Denote  $\cP_\reg:= \ri \mathfrak{u}(n)_\reg$ and
$\cT_\reg := \cT \cap \mathfrak{u}(n)_\reg$. Then, let us introduce
\be
M^\reg:= \{ (X,Y, \zeta) \in M\mid X \in \cP_\reg \}.
\label{511}\ee
This is a dense open subset of $M$.
It turns out that $M^\reg \cap M^c$ is not empty.
Consequently,
\be
M^c \cap M^\reg \subset M^c
\quad \hbox{and}\quad M_\red^\reg:= (M^{c}\cap M^{\reg})/U(n) \subset M_\red
\label{512}\ee
are dense open subsets of $M^c$ and $M_\red$, respectively.
To see this, we may use that the connected manifold $M^c$ inherits  a real analytic structure
from $M$,
and the analytic function given by the discriminant of $X$ is not identically zero on $M^c$.

Observe that
every $U(n)$ orbit in $M^c \cap M^\reg$ has representatives in the `gauge slice'
\be
S:= \{ (Q,Y, \zeta) \in M^c \cap M^\reg \mid  Q \in \ri \cT_\reg\}.
\label{513}\ee
When evaluated on $S$,  the diagonal entries of the moment map condition
$J(Q,Y, \zeta) = \ri c \1_n$  imply that
\be
\vert \zeta_j \vert^2 := \zeta_j \zeta_j^\dagger = c, \qquad \forall j=1,\dots,n.
\label{514}\ee
It also follows that $Y$ takes the form
\be
Y_{jk} = p_j \delta_{jk}  -\ri (1- \delta_{jk})\frac{\zeta_j \zeta_k^\dagger} {q_j - q_k}, \qquad  1\leq j,k \leq n,
\label{515}\ee
where  $Q\in \ri \cT_\reg$ and $P\in \ri \cT$ \eqref{53} are arbitrary.
The $U(n)$ transformations  mapping $S$ to itself are given by \eqref{52}, and this leads to the
the identification
\be
M_\reg^\red =(M^c\cap M^\reg)/U(n) \equiv  S / \fN.
\label{516}\ee
Moreover, using the permutation matrices that are contained in $\fN$ \eqref{54}, we can bring $Q$ into the open Weyl chamber
\be
\cC:= \{ Q \in \ri \cT \mid q_1 > q_2 > \cdots > q_n\}.
\label{517}\ee
Then, the  residual gauge transformations belong to the maximal torus $\bT < U(n)$, and
they multiply each $\zeta_i$  by an arbitrary element of $U(1)$. Since the length of each $\zeta_i$ is fixed,
we finally obtain the identification
\be
M_\red^\reg  = \cC \times \ri\cT \times (\bC \bP^{\ell-1} \times \cdots \times \bC \bP^{\ell-1}).
\label{518}\ee
Regarding $\cC \times \ri\cT$ as a model of $T^* \cC$, the product structure \eqref{518} holds in the symplectic sense, where
all the $n$ copies of the complex projective space $\bC \bP^{\ell-1}$ carry a multiple of the Fubini--Study symplectic form.
Here, we use that $\bC \bP^{\ell-1}$ is the symplectic reduction of $\bC^\ell$ defined by a moment map constraint
of the form \eqref{514} for the natural $U(1)$ action.
The Hamiltonian $\cH_{\mathrm{spin}}$ \eqref{51} results by restricting $H$ \eqref{56} on the gauge slice $S$ \eqref{513}, and it represents
the reduced Hamiltonian $H_\red$ on the dense open subset  $M^\reg_\red \subset M_\red$.

\begin{thm}\label{thm:51}
The  confining  spin Calogero--Moser model given by reduced system \eqref{510} is strongly isochronous,
 with period $T=2\pi/\omega$, for all values of the parameters $\omega>0$ and $c>0$.
\end{thm}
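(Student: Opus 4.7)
The plan is to apply Proposition \ref{prop:34} to the strongly isochronous system $(M,\Omega,H)$ of \eqref{56} with symmetry group $K=U(n)$ and moment map value $\mu = \ri c \1_n$. Since $\mu$ is a central element of $\mathfrak{u}(n)^*$, its coadjoint stabilizer is $K_\mu = U(n)$. The prerequisites on $J^{-1}(\mu) = M^c$ stipulated by Proposition \ref{prop:34} are already available in the excerpt: the $U(n)$ action on $M^c$ is free (so $\mu$ is a regular value, $M^c$ is an embedded submanifold, and there is only one orbit type), and $M^c$ is connected. The task therefore reduces to exhibiting a single point $x \in M^c$ whose isotropy group satisfies \eqref{321}, which here takes the simple form $(U(1) \times U(n))_x = \{1\} \times \{1\}$.

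To construct such a witness I would work inside the gauge slice $S$ of \eqref{513}. Fix $Q = \diag(q_1,\dots,q_n) \in \ri \cT_\reg$ with $q_1 > q_2 > \cdots > q_n > 0$, choose $\zeta \in \bC^{n \times \ell}$ with each row obeying $\vert \zeta_i\vert^2 = c$, and let $Y$ be the matrix determined by \eqref{515} with some diagonal $P = \diag(p_1,\dots,p_n)$. By construction the triple $(Q,Y,\zeta)$ lies in $M^c$. Suppose that $(e^{\ri \tau}, g) \in U(1) \times U(n)$ fixes it. Combining the $U(1)$ action \eqref{56++} with the $U(n)$ action \eqref{57} as in \eqref{37} and writing $Z = \omega Q - \ri Y$, the stabilizer condition translates into $g Z g^{-1} = e^{-\ri \tau} Z$ and $g \zeta = \zeta$. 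Taking the (conjugation-invariant) trace of the first relation yields $\tr(Z) = e^{-\ri \tau} \tr(Z)$, and since $\tr(Z) = \omega \sum_i q_i - \ri \sum_i p_i$ has strictly positive real part by our choice of $Q$, we have $\tr(Z) \neq 0$ and hence $e^{\ri \tau} = 1$. The remaining identity $g Z g^{-1} = Z$ together with $g \zeta = \zeta$ says that $g$ stabilizes $(Q,Y,\zeta)$ under the $U(n)$ action on $M^c$, and freeness of this action forces $g = \1_n$.

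This confirms \eqref{321} at the chosen point. Because $M^c$ is connected, Proposition \ref{prop:34} immediately concludes that the reduced Hamiltonian flow on $M_\red$ is strongly isochronous with the same basic period $T = 2\pi/\omega$ as the ambient harmonic oscillator \eqref{56}. The only delicate ingredient is the trace argument above: one needs the base point to satisfy $\tr(Z) \neq 0$, which is the whole reason for taking the $q_i$ strictly positive rather than arbitrary regular values; any weaker choice would also leave open the possibility $e^{\ri \tau} = -1$, an exception that would destroy the conclusion exactly as in Remark \ref{rem:43}.
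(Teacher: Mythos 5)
Your proof is correct and follows the same overall strategy as the paper: invoke Proposition \ref{prop:34}, use the freeness of the $U(n)$ action on the connected set $M^c$, and exhibit one point of the gauge slice $S$ with all $q_i>0$ whose $U(1)\times U(n)$ isotropy is trivial. The one place where you genuinely diverge is in pinning down $e^{\ri\tau}$: the paper applies the quadratic invariant $\langle Z,Z\rangle=\tr(Z^2)$, which only yields $e^{2\ri\tau}=1$ and then needs a separate eigenvalue argument (namely that $gQg^{-1}=-Q$ is impossible when all $q_i>0$) to exclude $e^{\ri\tau}=-1$; you instead take the linear invariant $\tr(Z)=\omega\sum_i q_i-\ri\sum_i p_i$, whose nonvanishing kills every $e^{\ri\tau}\neq 1$ in one stroke. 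This is a clean simplification that exploits the nontrivial center of $\mathfrak{u}(n)$; note that it would not be available in the simple Lie algebra setting of Theorem \ref{thm:42}, where $\tr(Z)\equiv 0$ and the quadratic form is forced. Only your closing remark is slightly off: if $\tr(Z)$ vanished you would learn nothing at all from the trace, not merely be left with the option $e^{\ri\tau}=-1$; the residual $\pm 1$ ambiguity is what the \emph{quadratic} invariant leaves behind, and that is the step for which the positivity of the $q_i$ is really needed in the paper's version of the argument.
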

\begin{proof}
Using  $Z=\omega X - \ri Y$ and $\zeta$ as our variables,  we  consider the combined action of $U(1) \times U(n)$ on $M^c$
whereby $(e^{\ri \tau}, g)\in U(1) \times U(n)$ sends $(Z,\zeta)$ to $(e^{\ri \tau} g Z g^{-1}, g\zeta)$.
By Proposition \ref{prop:34} and the fact that $M^c$ is connected,  it is sufficient to exhibit a single element
$(Z,\zeta)\in M^c$ having trivial isotropy group,
\be
(U(1) \times U(n))_{(Z,\zeta)} = \{ 1\} \times \{\1_n\}.
\label{519}\ee
Now we form $Z:= \omega Q - \ri Y $ from an element of $S$ \eqref{513} for which
 $Q \in \cC$ \eqref{517} satisfying also $q_n>0$ and $P = \1_n$ in (\ref{515}).  Then $\langle Z, Z \rangle \neq 0$,
and  similarly to (\ref{412})  we get that
$e^{2\ri \tau} =1$ for any $(e^{\ri \tau}, g)$ from the isotropy group. If $e^{\ri \tau}=-1$, then $g Z g^{-1} = - Z$  and $g \zeta = \zeta$ must hold.
But this would imply that $g Q g^{-1} = - Q$, which is impossible since $gQ g^{-1}$ has the same eigenvalues as $Q$,
and all eigenvalues of our chosen $Q$ are positive.
In conclusion, we exhibited a point $(Z,\zeta) \in M^c$ for which
\be
(U(1) \times U(n))_{(Z, \zeta)} = \{ 1\} \times U(n)_{(Z,\zeta)} = \{1\} \times \{\1_n\}.
\label{521}\ee
In the last step,  we utilized that $U(n)_{(Z,\zeta)}= \{\1_n\}$, because $U(n)$ acts freely on $M^c$.
\end{proof}

\begin{rem}
It is worth pointing out that the spin Calogero--Moser model \eqref{51} admits a family of non-degenerate, compatible Poisson structures.
These can be obtained by replacing the unreduced symplectic form $\Omega$ \eqref{55} with
\be
\Omega_z := \Omega+ \sum_{1\leq \alpha < \beta \leq \ell} z_{\alpha \beta} d \vert v_\alpha\vert^2 \wedge d \vert v_\beta \vert^2,
\label{523}\ee
where the $z_{\alpha \beta}\in \bR$ are arbitrary parameters and $v_\alpha\in \bC^n$  $(\alpha = 1,\dots, \ell)$ denotes the $\ell$-th column
of the matrix $\zeta \in \bC^{n\times \ell}$.  It follows from  Remark 3.10 in \cite{FF} that $\Omega_z$ is symplectic for
each choice of  $z_{\alpha \beta}$, and with respect to this symplectic structure \eqref{57} still gives a Hamiltonian action on
$M$, with the same moment map \eqref{58}.
(In \cite{FF} $T^* U(n) \times \bC^{n\times \ell}$ was considered, and further arbitrary parameters $x_\alpha\in \bR$
were also introduced, which are here set to $1$.)
The unreduced Hamiltonian $H$  \eqref{56} generates the same dynamics through any of these Poisson structures on $M$,
which induce compatible, non-degenerate Poisson structures for the reduced system on $M_\red = M^c/U(n)$.
\end{rem}

\section{$B_n$ type generalization of the Gibbons--Hermsen model}
\label{S:6}

We here derive a family of maximally superintegrable  models associated with the $B_n$ root systems.
These are real forms of models considered in  \cite{FG} in the complex case, where superintegrability was
shown for generic values of the circular frequency $\omega$.

We start with the reductive Lie group $G_0:= U(n,n)$, which is the group $2n$ by $2n$ complex matrices $g$ satisfying
\be
g^\dagger I_{n,n} g = I_{n,n}
\ee
where  $I_{n,n}:= \diag(\1_n, - \1_n)$. The Lie algebra $\fG0:=  u(n,n)$ admits the decomposition
\be
\fG0= \cK_0 + \cP_0,
\ee
where $\cK_0$ and $\cP_0$ contain, respectively, the anti-Hermitian and the Hermitian matrices belonging to $\fG0$.
This is an orthogonal  decomposition with respect to the \emph{real} bilinear form
$\langle U, V \rangle := \tr(UV)$, $\forall U,V \in\fG0$.
Concretely, $\cK_0 = \mathfrak{u}(n) \oplus \mathfrak{u}(n)$ consists of block diagonal matrices, and $\cP_0$
consists of the block off-diagonal matrices of the form
\be
X= \begin{pmatrix} 0 & x\\ x^\dagger & 0
\end{pmatrix}, \qquad \forall x \in \bC^{n\times n}.
\label{A3}\ee
We identify the dual space of $\cP_0$ with itself by the trace form, and consider the unreduced phase space
\be
M:= T^* \cP_0 \times \bC^{n \times \ell_1} \times \bC^{n\times \ell_2} = \cP_0 \times \cP_0 \times  \bC^{n \times \ell_1} \times \bC^{n\times \ell_2} = \{ (X,Y,\zeta,\eta)\},
\ee
where $\ell_1$ and $\ell_2$ are arbitrary non-negative integers, so that $\ell_1 + \ell_2 >0$.
Without loss of generality, we shall assume that $\ell_1 >0$; if $\ell_2 =0$, then the variable $\eta$ is simply not present.

We equip the phase space $M$ with the symplectic form
\be
\Omega = \tr (dY \wedge dX - \ri d\zeta \wedge d \zeta^\dagger - \ri d \eta \wedge d\eta^\dagger)
\label{A5}\ee
and the harmonic oscillator Hamiltonian
\be
H(X,Y, \zeta, \eta):= \frac{1}{2} \tr( Y^2) + \frac{1}{2} \omega^2 \tr(X^2).
\label{HB1}\ee
Then, we consider Hamiltonian reduction under the maximal compact subgroup $K:= U(n) \times U(n)$ of $G_0$.
The element $g:= (g_1,g_2) \in K$, best thought of as a block diagonal matrix of size $2n$ by $2n$,
acts on $M$  by the symplectomorphism
\be
A_g: (X,Y, \zeta, \eta) \mapsto (gX g^{-1}, g Y g^{-1}, g_1 \zeta, g_2\eta).
\label{A7}\ee
By parametrizing $X$ with $x$ and similarly $Y$ with $y$ according to \eqref{A3},
the action becomes
\be
A_g: (x,y, \zeta, \eta) \mapsto (g_1 x g_2^{-1}, g_1 y g_2^{-1}, g_1 \zeta, g_2 \eta).
\ee
Identifying $u(n) \oplus u(n)$ with its dual space via the trace form, the corresponding moment map is given by
\be
J(X,Y, \zeta, \eta) = [X,Y] + \begin{pmatrix} \ri \zeta \zeta^\dagger & 0 \\0&  \ri \eta \eta^\dagger
\end{pmatrix} = \begin{pmatrix}  x y^\dagger - y x^\dagger + \ri \zeta \zeta^\dagger & 0 \\0& x^\dagger  y -
 y^\dagger x +  \ri \eta \eta^\dagger \end{pmatrix}.
\label{A8}\ee
We fix the moment map to the value
\be
\mu_0 := \diag( \ri c_1 \1_n, \ri c_2 \1_n),
\label{A9}\ee
where $c_1$ and $c_2$ are real constants, satisfying $c:= c_1 + c_2 >0$.
One can see from comparison with the complex case (equation (3.6) in \cite{FG} for $m=2$, with $\lambda_0 = c_1$, $\lambda_1 = c_2$)
that $K$ acts freely on $J^{-1}(\mu_0)$ if the regularity conditions
\be
c_1 c_2 \neq 0
\quad\hbox{and}\quad c_1/c_2 \neq - 1 + 1/k, \quad \forall k \in \bZ\setminus \{0,1\}
\label{regcond}\ee
are satisfied, which we henceforth assume. It follows from results of \cite{Knop} that $J^{-1}(\mu_0)$ is connected.
Consequently, $M_\red:= J^{-1}(\mu_0)/K$ is a smooth (even real-analytic), connected symplectic manifold.

In order to obtain a physical interpretation of the reduced Hamiltonian, we apply the singular value decomposition to  bring the matrix $x$
to diagonal form with real entries. Moreover, we restrict to the dense open subset of $J^{-1}(\mu_0)$
containing the $K$-orbits that intersect the `gauge slice', denoted $S_0$,
on which $x$ in \eqref{A3} has the form
\be
x=\diag(q_1, \dots, q_n)/\sqrt{2}
\quad\hbox{with}\quad  q_i \in \bR,\,\,  q_i\neq 0,\,\,
 (q_i \pm q_j)\neq 0\,\,\, \forall i\neq j.
\label{S0}\ee
The $\sqrt{2}$ is included for convenience.
We could arrange the $q_i$ to be all positive and ordered as well, but here we do not do so in order to bring out a residual Weyl group symmetry.

The next proposition can be verified by straightforward calculation, which we omit.

  \begin{prop}
  In the diagonal gauge $S_0$, defined by the condition \eqref{S0},   the moment map  constraint implies
  the relation
  \be
  \zeta_j \zeta_j^\dagger + \eta_j \eta_j^\dagger = c, \qquad \forall j=1,\dots, n,
  \label{c+}\ee
  which explains why $c= c_1 + c_2$ must be positive,
   and $y\in \bC^{n\times n}$ can be expressed as
  \be
  y_{jj} = \frac{1}{\sqrt{2}} \left(p_j + \ri \frac{c_2 - \eta_j \eta_j^\dagger}{q_j}\right),
  \ee
  where $p_j\in \bR$ is arbitrary, and
  \be
  y_{jk} =- \frac{\ri}{\sqrt{2}} \frac{\zeta_j \zeta_k^\dagger + \eta_j \eta_k^\dagger}{q_j - q_k}
  + \frac{\ri}{\sqrt{2}} \frac{\zeta_j \zeta_k^\dagger - \eta_j \eta_k^\dagger}{q_j + q_k},
  \quad \hbox{for}\quad j\neq k.
  \ee
  On the gauge slice $S_0$ the Hamiltonian $H$ \eqref{HB1} takes the form
  \be
 H_{S_0}=
\frac{1}{2} \sum_{j=1}^n \left[p_j^2 + \omega^2 q_j^2 + \frac{(c_2 - \eta_j \eta_j^\dagger)^2}{ q_j^2}\right]  +
\frac{1}{2} \sum_{j\neq k}\left[ \frac{\vert \zeta_j \zeta_k^\dagger + \eta_j \eta_k^\dagger\vert^2}{(q_j - q_k)^2}
 +\frac{\vert \zeta_j \zeta_k^\dagger - \eta_j \eta_k^\dagger\vert^2}{(q_j + q_k)^2}\right],
  \label{HB2}\ee
  and the symplectic form $\Omega$ \eqref{A5} gives
  \be
  \Omega_{S_0} =  \sum_{j=1}^n dp_j \wedge dq_j -\ri \sum_{j=1}^n (d\zeta_j \wedge d \zeta_j^\dagger + d \eta_j \wedge d\eta_j^\dagger).
  \label{OmS0}\ee
  \end{prop}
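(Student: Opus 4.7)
The plan is to carry out the straightforward but tedious algebraic unpacking of the moment map equation $J = \mu_0$ on the slice $S_0$, and then to substitute the resulting expressions into $H$ and $\Omega$. Because on $S_0$ the matrix $x = \diag(q_1,\ldots,q_n)/\sqrt{2}$ is real and diagonal, one has $x^\dagger = x$, so that the constraint from \eqref{A8} and \eqref{A9} decouples into its two $n\times n$ Hermitian blocks, and I would examine the diagonal and off-diagonal entries of $y$ separately.

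For the diagonal entries $j=k$, a direct computation gives $(xy^\dagger - yx)_{jj} = -\sqrt{2}\,\ri\, q_j\,\mathrm{Im}(y_{jj})$ from the first block and $(xy - y^\dagger x)_{jj} = +\sqrt{2}\,\ri\, q_j\,\mathrm{Im}(y_{jj})$ from the second. Equating these with the diagonal entries of $\mu_0$ yields $\sqrt{2}\,q_j\,\mathrm{Im}(y_{jj}) = \zeta_j\zeta_j^\dagger - c_1$ and $\sqrt{2}\,q_j\,\mathrm{Im}(y_{jj}) = c_2 - \eta_j\eta_j^\dagger$; compatibility of these two expressions forces \eqref{c+}, their common value determines $\mathrm{Im}(y_{jj})$, and $p_j := \sqrt{2}\,\mathrm{Re}(y_{jj})$ remains unconstrained. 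For the off-diagonal entries $j\neq k$, the two block equations combine into a $2\times 2$ linear system in $y_{jk}$ and $\overline{y_{kj}}$; taking their sum and difference decouples the system, giving $(q_j-q_k)(y_{jk}+\overline{y_{kj}})$ proportional to $\zeta_j\zeta_k^\dagger + \eta_j\eta_k^\dagger$ and $(q_j+q_k)(y_{jk}-\overline{y_{kj}})$ proportional to $\zeta_j\zeta_k^\dagger - \eta_j\eta_k^\dagger$. Division by the nonvanishing factors $q_j\mp q_k$ and recombination of halves then reconstruct $y_{jk}$ in the stated form.

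For $H$, the off-block-diagonal structure gives $\tr(X^2) = 2\tr(xx^\dagger) = \sum_j q_j^2$ and $\tr(Y^2) = 2\sum_{j,k}|y_{jk}|^2$; the diagonal contribution immediately furnishes the single-particle terms, including the centrifugal $(c_2-\eta_j\eta_j^\dagger)^2/q_j^2$, while expanding $|y_{jk}|^2$ for $j\neq k$ yields the two squared terms of \eqref{HB2} together with a cross term proportional to $1/[(q_j-q_k)(q_j+q_k)]$. The one step I expect to require real care is showing that this cross term cancels in the symmetric sum over ordered pairs: under $(j,k)\leftrightarrow(k,j)$ the factor $(q_j-q_k)$ flips sign while $(q_j+q_k)$ does not, and using $\zeta_k\zeta_j^\dagger = \overline{\zeta_j\zeta_k^\dagger}$ together with the analogous identity for $\eta$, the cross term becomes minus its own complex conjugate under the swap, so twice its real part vanishes pairwise. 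For $\Omega$, since $dx$ is real and diagonal on $S_0$ only the real part of the diagonal $dy_{jj}$ contributes to $\tr(dx\wedge dy^\dagger) + \tr(dx^\dagger\wedge dy)$; this reduces to $\sum_j dp_j\wedge dq_j$ up to the overall sign convention, while the spin $2$-forms $\ri\,\tr(d\zeta\wedge d\zeta^\dagger)$ and $\ri\,\tr(d\eta\wedge d\eta^\dagger)$ pull back unchanged to supply the remaining terms of \eqref{OmS0}.
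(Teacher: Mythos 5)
Your computation is correct and is precisely the ``straightforward calculation'' that the paper omits: decoupling the two diagonal blocks of the moment map equation, reading off \eqref{c+} and $\mathrm{Im}(y_{jj})$ from the diagonal entries, solving the off-diagonal $2\times2$ system by sum and difference, and observing the pairwise cancellation of the cross term $\mathrm{Re}\bigl((\zeta_j\zeta_k^\dagger+\eta_j\eta_k^\dagger)\overline{(\zeta_j\zeta_k^\dagger-\eta_j\eta_k^\dagger)}\bigr)/(q_j^2-q_k^2)$ under $j\leftrightarrow k$ all check out. Your hedge about the overall sign in $\Omega_{S_0}$ is warranted: taking \eqref{A5} literally gives $\sum_j dq_j\wedge dp_j$, and the stated $\sum_j dp_j\wedge dq_j$ corresponds to the ordering $\tr(dY\wedge dX)$ used in \eqref{43} and \eqref{56}, so the discrepancy is a convention slip in the paper rather than a gap in your argument.
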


The gauge slice $S_0$ engenders the dense open subset $S_0/N_K(\cA_0)$ of $J^{-1}(\mu_0)/K$, where
$N_K(\cA_0)<K$ is the normalizer of the maximal Abelian subspace $\cA_0$ of $\cP_0$ containing the matrices  $X$ \eqref{A3}
with arbitrary real diagonal $x$.  It is well known that $N_K(\cA_0)$ is generated by the elements of the form
\be
(g_1, g_2) = (R z, R z)
\quad
\hbox{and}\quad
(g_1, g_2) = (E_{ii}z, - E_{ii}z),
\label{resK}\ee
where $R\in U(n)$ is a permutation matrix, including the identity, $z$ is a diagonal element of $U(n)$, and $E_{ii}$ denotes
the $n\times n$ matrix verifying $(E_{ii})_{jk} = \delta_{ij} \delta_{ik}$.
The corresponding gauge transformations act as arbitrary permutations and sign changes
of the variables $q_1,\dots, q_n$ (and $p_1,\dots, p_n)$), which constitute the Weyl group of $B_n$ type.
The `residual gauge transformations' \eqref{resK} also act on the spin variables $\zeta$ and $\eta$,
according to \eqref{A7}. Of course, the Hamiltonian $H_{S_0}$ \eqref{HB2} and $2$-form $\Omega_{S_0}$ \eqref{OmS0}
are invariant with respect to these transformations.

\begin{thm}
If both $\ell_1$ and $\ell_2$ are non-zero and the regularity conditions \eqref{regcond} hold, then
the $B_n$ type confining spin Calogero--Moser Hamiltonian
that descends from $H$ \eqref{HB1} is strongly isochronous on the reduced phase space $J^{-1}(\mu_0)/K$,
with period $T= 2\pi/\omega$.
\end{thm}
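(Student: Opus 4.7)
The plan is to apply Proposition~\ref{prop:34} by exhibiting a single point $P_0 \in J^{-1}(\mu_0)$ whose isotropy under the combined action of $U(1) \times K_{\mu_0}$ is trivial. Since $\mu_0 = \diag(\ri c_1 \1_n, \ri c_2 \1_n)$ is central in $K = U(n) \times U(n)$, one has $K_{\mu_0} = K$. Together with the stated connectedness of $J^{-1}(\mu_0)$ and freeness of the $K$-action on it, condition \eqref{321} reduces to finding one such $P_0$.

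Set $Z := \omega X - \ri Y$. As in the proof of Theorem~\ref{thm:51}, the Hamiltonian flow reads $(Z, \zeta, \eta) \mapsto (e^{\ri \omega t} Z, \zeta, \eta)$, and a pair $(e^{\ri \tau}, g) \in U(1) \times K$ with $g = (g_1, g_2)$ fixes $(X, Y, \zeta, \eta)$ precisely when $g Z g^{-1} = e^{-\ri \tau} Z$, $g_1 \zeta = \zeta$, and $g_2 \eta = \eta$. I would choose $P_0$ in the gauge slice $S_0$ with $x = \diag(q_1, \dots, q_n)/\sqrt{2}$, $0 < q_1 < \dots < q_n$, every row $\zeta_j$ nonzero, $\eta \neq 0$, and $\tr(Z^2) \neq 0$; the explicit parametrization established in the preceding proposition makes it clear that these conditions cut out a nonempty open subset of $S_0$. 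Conjugation invariance of $\tr(Z^2)$ then forces $e^{-2 \ri \tau} = 1$, and the case $e^{\ri \tau} = 1$ yields $g \in K_{P_0} = \{(\1_n, \1_n)\}$ by freeness.

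The main obstacle is ruling out $e^{\ri \tau} = -1$. In that case $g X g^{-1} = -X$ translates, via the block form \eqref{A3}, into $g_1 x g_2^{-1} = -x$; multiplying by the Hermitian adjoint on the right gives $g_1 (x x^\dagger) g_1^{-1} = x x^\dagger = \tfrac{1}{2}\diag(q_j^2)$, and the analogous computation on the left gives $g_2^{-1} (x^\dagger x) g_2 = x^\dagger x$. Distinctness of the $q_j^2$ then forces both $g_1$ and $g_2$ to be diagonal, $g_1 = \diag(z_i)$ and $g_2 = \diag(w_i)$, and reinserting into $g_1 x g_2^{-1} = -x$ yields $w_i = -z_i$. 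Nonvanishing of every row of $\zeta$ combined with $g_1 \zeta = \zeta$ forces $z_i = 1$, so $g_1 = \1_n$ and $g_2 = -\1_n$; but then $g_2 \eta = -\eta = \eta$ contradicts $\eta \neq 0$. The hypothesis $\ell_1, \ell_2 > 0$ is essential precisely here: without both spin factors, an element of the form $(g_1, g_2) = (\pm \1_n, \mp \1_n)$ would survive and contribute a nontrivial $U(1)$-factor to the isotropy, giving only half the claimed period, in parallel with Remark~\ref{rem:43}.
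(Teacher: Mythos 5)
Your proposal is correct and follows essentially the same route as the paper: reduce to exhibiting one point of $J^{-1}(\mu_0)$ with trivial $U(1)\times K$ isotropy, use $\tr(Z^2)\neq 0$ to force $e^{2\ri\tau}=1$, and rule out $e^{\ri\tau}=-1$ by showing $g_1 x g_2^{-1}=-x$ forces $g=\diag(z_1,\dots,z_n,-z_1,\dots,-z_n)$, which is incompatible with fixing both spin matrices (the paper phrases the genericity condition as $\zeta_1^\dagger\eta_1\neq 0$, which is equivalent in effect to your requirement that the relevant rows of $\zeta$ and $\eta$ be nonzero). Your extra details --- $K_{\mu_0}=K$ by centrality of $\mu_0$, and the diagonality of $g_1,g_2$ via conjugation invariance of $xx^\dagger$ and $x^\dagger x$ --- are correct elaborations of steps the paper leaves implicit.
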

\begin{proof}
Following the proof of Theorem \ref{thm:51}, it is sufficient to exhibit a single point of $M^c = J^{-1}(\mu_0)$ whose
isotropy group with respect to the $U(1) \times K$ action is trivial.
To do so, take $(X,Y, \zeta, \eta)\in M^c$ satisfying the gauge fixing condition \eqref{S0} and also the additional requirements
\be
\tr (\omega X - \ri Y)^2 \neq 0\quad \hbox{and}\quad
\zeta_1^\dagger \eta_1 \neq 0.
\ee
If $(e^{\ri \tau},g)$ is from the isotropy subgroup of this point, then we obtain that $e^{2 \ri \tau}=1$ must hold.
Next, if $e^{\ri \tau}$ was equal to $-1$, then $g=(g_1, g_2)$ must verify
$g_1 x g_2^{-1} = - x$ with $x$ in \eqref{S0}, which implies that $g$ is of the form
\be
g = \diag(z_1, \dots, z_n,  - z_1,\dots, - z_n)
 \ee
 with some $z_1,\dots, z_n \in U(1)$.
 But $g$ must also satisfy
 \be
 g_1 \zeta = \zeta \quad \hbox{and}\quad g_2 \eta = \eta.
 \ee
 By combining this with the assumption that the $\ell_1 \times \ell_2$ matrix $\zeta_1^\dagger \eta_1$ is not identically zero, we obtain a contradiction.
Therefore $e^{\ri \tau}=1$ must hold, and consequently $g=\1_{2n}$, since
the action of $K$ is free on $J^{-1}(\mu_0)$.
\end{proof}

\begin{rem}
If $l_1>0$, $l_2 =0$ and \eqref{regcond} holds, then the principal isotropy group with respect to the $U(1) \times K$ action
on $M^c$ contains two elements, the identity and $(-1, g_0)$ with $g_0 = \diag(\1_n, - \1_n)\in K$.
This means that the Hamiltonian evolution on $M^c$ under time $T/2$ coincides with a $K$ gauge transformation.
(This is  the case on the full phase space $M$, too.)
The reduced system is still strongly isochronous, but with period $T/2$.
\end{rem}

The Hamiltonian \eqref{HB2} seems to be closely related to the quantum mechanical model due to Yamamoto \cite{Y},
but as of writing the precise connection  is not clear.

\section{Discussion and outlook}
\label{S:7}

In this paper, we first exhibited the maximal superintegrability of strongly isochronous Hamiltonians from a group theoretic
viewpoint. Then, we considered reductions under compact symmetry groups and provided simple sufficient conditions
for guaranteeing that the reduced phase space is a smooth manifold on which the reduced
Hamiltonian has the same basic period as the original one.
As applications, we  proved the maximal superintegrability of the Lie algebraic spin Calogero--Moser
models \eqref{440}   as well as  of the Gibbons--Hermsen  model  \eqref{51}
and its $B_n$ type generalization \eqref{HB2}.
It should be noted that in each case the just  cited form of the reduced oscillator Hamiltonian is valid on a dense open subset
of the reduced phase space,
where the position variables satisfy a regularity condition, and
some periodic motions may leave this subset.

Building on the earlier works \cite{AKLM,Hoch}, one can construct a larger family of Lie algebraic spin Calogero--Moser models
with harmonic term as follows. Take any real simple Lie algebra $\fG0$ with associated Cartan decomposition
$\fG0 = \cK_0 + \cP_0$ and reduce the isotropic harmonic oscillator defined on the phase space $T^* \cP_0$ via a symmetry group
corresponding to the maximal compact subalgebra $\cK_0 < \fG0$.
From this perspective, the models of section \ref{S:4} are attached to the
complex simple Lie algebras  regarded as real Lie algebras.  Other distinguished special cases are
furnished by the split real simple Lie algebras. The simplest example is $\fG0 = \mathfrak{sl}(n,\bR)$, which yields
the Euler--Calogero--Moser model introduced in \cite{Wo2}.
Further generalizations based on  polar representations of  compact simple Lie groups
can also be derived \cite{AKLM}.

In section \ref{S:4}, we restricted ourselves to the principal orbit type for the compact symmetry group in order
to avoid technical complications. However, maximal superintegrability of the reduced
Hamiltonian is expected to hold on all symplectic leaves of the full Poisson quotient space,
except  trivial leaves like the one obtained from the equilibrium point of the unreduced oscillator.
The reason is that projections of periodic trajectories remain periodic, generically with the original period.
It is a largely open problem to describe the full stratified reduced phase spaces \cite{OR,SL} of the here-considered  and
other spin Calogero--Moser type models
that arise in the Hamiltonian reduction approach \cite{WGMPnew,Re2}.
A step forward in this direction has been taken
 in the recent article \cite{CJRX}.

 In the future, it would be interesting  to explore the detailed structure of the Poisson algebras formed by the constants of motion of the
 confining spin Calogero--Moser models.
 Due to a general theorem \cite{Sch}, one may restrict to $K$-invariant constants of motion of the original system
 that are polynomial in the unreduced variables.
 Quantum mechanical counterparts  of these Poisson algebras could be worth studying, too.
 Finally, one should also uncover the relationship between
 the quantization of the $B_n$ type confining spin Calogero--Moser Hamiltonian \eqref{HB2} and
the model introduced in \cite{Y}  directly at the quantum mechanical level.

\bigskip
\medskip
\subsection*{Acknowledgement}
I wish to thank Maxime Fairon for very useful discussions.
This work was supported in part by the NKFIH research grant K134946.
\medskip

\end{document}